\documentclass{article}
\usepackage[margin=1in]{geometry}
\usepackage[T1]{fontenc}
\usepackage[utf8]{inputenc}
\usepackage[all=normal,bibliography=tight]{savetrees}

\usepackage{listings}
\usepackage{cite}
  \usepackage{mathrsfs}

  \usepackage{amstext,amsfonts,amsthm,amsmath,amssymb}

\usepackage{graphicx,tikz}
\usepackage{comment}
\usepackage{url}
\usepackage{xspace}
\usepackage{todonotes}
\usepackage[shortlabels]{enumitem}
\usepackage[absolute]{textpos}

\def\cqedsymbol{\ifmmode$\lrcorner$\else{\unskip\nobreak\hfil
\penalty50\hskip1em\null\nobreak\hfil$\lrcorner$
\parfillskip=0pt\finalhyphendemerits=0\endgraf}\fi}

\newtheorem{lemma}{Lemma}[section]
\newtheorem{proposition}[lemma]{Proposition}

\newtheorem{theorem}[lemma]{Theorem}

\theoremstyle{definition}

\newcommand{\Oh}{\mathcal{O}}
\newcommand{\pmc}{\Omega}
\newcommand{\cc}{\mathtt{cc}}
\newcommand{\pmcfam}{\mathcal{F}}
\newcommand{\compfam}{\mathcal{D}}

\newcommand{\weight}{\mathfrak{w}}
\newcommand{\poten}{\zeta}

\title{On the Maximum Weight Independent Set Problem in graphs without induced cycles of length at least five}

\author{ 
  Maria Chudnovsky\thanks{Supported by NSF grants DMS-1763817. This material is based upon work supported in part by the U. S. Army Research Office under
    grant number  W911NF-16-1-0404.}\\
Princeton University, Princeton, NJ 08544 \and
 Marcin Pilipczuk\thanks{This research is a part of a project that has received funding from the European Research Council (ERC)
under the European Union's Horizon 2020 research and innovation programme
Grant Agreement no.~714704.} 
  \\ Institute of Informatics, University of Warsaw\\Banacha 2, 02-097 Warsaw, Poland \and 
Micha\l{} Pilipczuk\thanks{This research is a part of a project that has received funding from the European Research Council (ERC)
under the European Union's Horizon 2020 research and innovation programme
Grant Agreement no.~677651.} 
  \\ Institute of Informatics, University of Warsaw\\Banacha 2, 02-097 Warsaw, Poland \and 
    St\'{e}phan Thomass\'{e} \\ ENS de Lyon, 69364 Lyon Cedex 07, France
}

\date{}

\begin{document}

\maketitle

\begin{textblock}{20}(0, 13.0)
\includegraphics[width=40px]{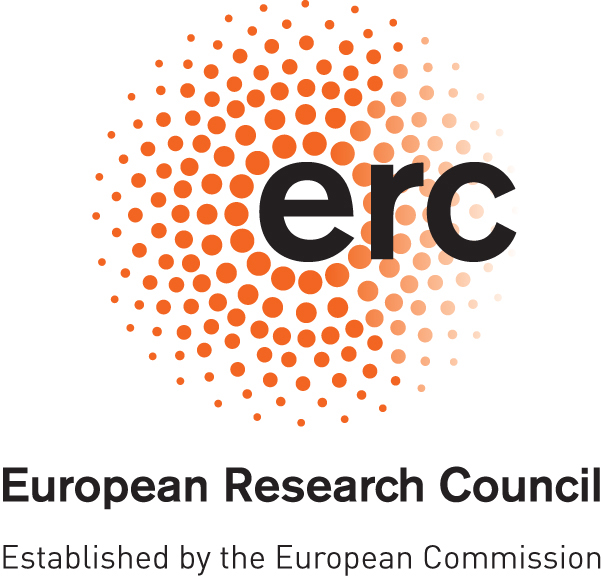}%
\end{textblock}
\begin{textblock}{20}(-0.25, 13.4)
\includegraphics[width=60px]{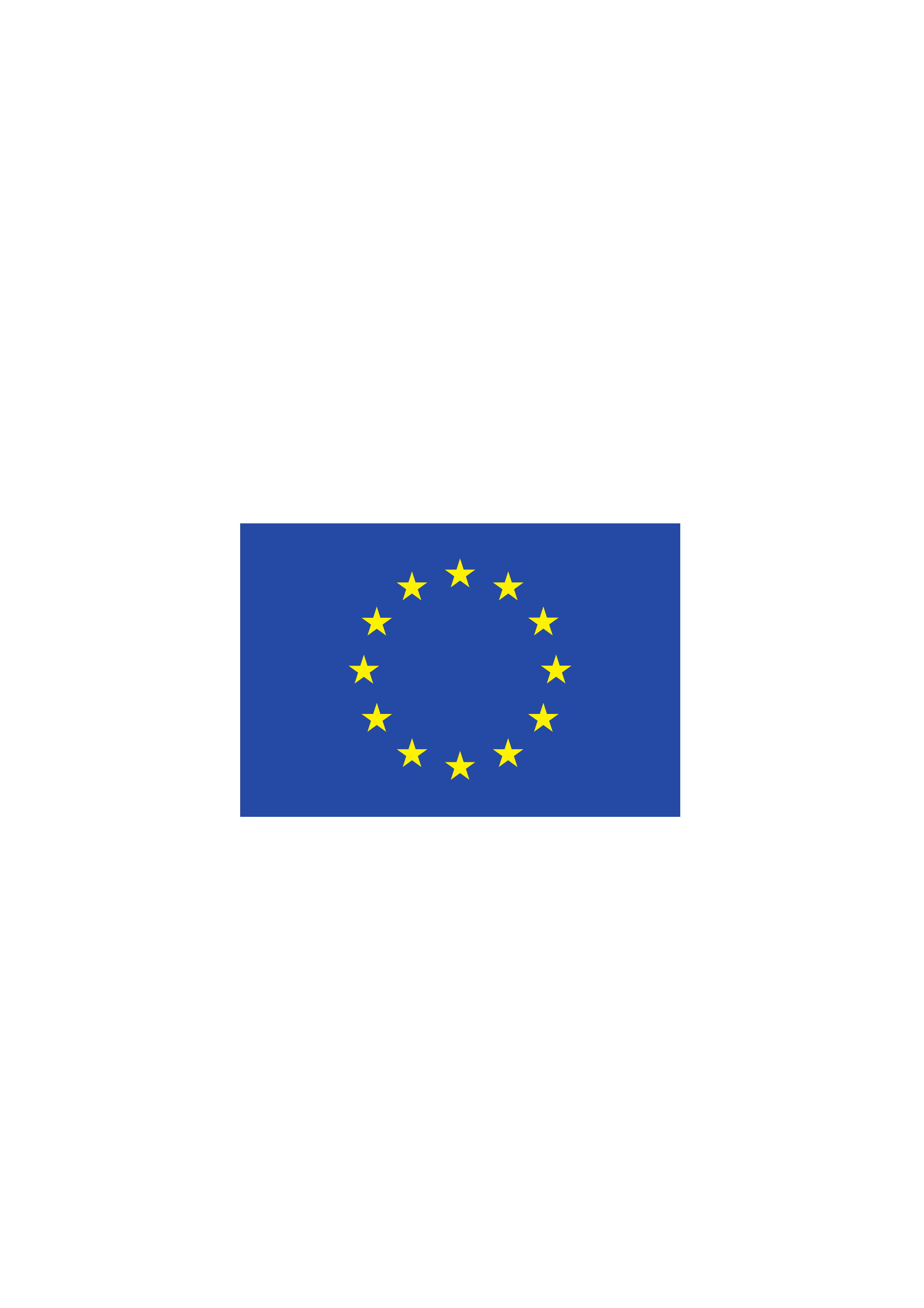}%
\end{textblock}

\begin{abstract}
  A {\em hole} in a graph is an induced cycle of length at least $4$, and an
  {\em  antihole} is the complement of an induced cycle of length at least $4$.
  A hole or antihole is {\em long} if its length is at least $5$.
  For an integer $k$, the {\em $k$-prism} is the graph consisting of two cliques
  of size $k$ joined by a matching.
The complexity of \textsc{Maximum (Weight) Independent Set} (\textsc{MWIS}) in long-hole-free graphs remains an important open problem. 
In this paper we give a polynomial time algorithm to solve \textsc{MWIS} in
long-hole-free graphs with no $k$-prism (for any fixed integer $k$), and
a subexponential algorithm for  \textsc{MWIS} in long-hole-free graphs
in general. As a special case this gives a polynomial time algorithm
to find a maximum weight clique in perfect graphs with no long antihole, and no hole of length $6$.
The algorithms use the framework of minimal chordal completions and potential maximal cliques.
\end{abstract}

\section{Introduction}
All graphs in this paper are finite and simple.
A {\em clique} in a graph is a set of pairwise adjacent vertices, and an
{\em independent set} (or a {\em stable set}) is a set of pairwise non-adjacent
vertices. The {\em chromatic number} $\chi(G)$ of a graph $G$
is the smallest number of independent sets of $G$ whose union is
$V(G)$.
A graph $G$ is \emph{perfect} if every induced subgraph $H$ of $G$
satisfies $\chi(H)=\omega(H)$, where $\chi(H)$ is the chromatic number
of $H$ and $\omega(H)$ is the maximum clique size in $H$.  In a graph
$G$, a \emph{hole} is an induced  cycle with at least four vertices
and an \emph{antihole} is the complement of a hole. The {\em length}
of a hole or an antihole is the number of vertices in it. A
hole or antihole is {\em long} if it has length at least $5$.

For two graphs $G$ and $F$ we say that $G$ \emph{contains} $F$ if $F$ is
isomorphic to an
induced subgraph of $G$.  A graph $G$ is {\em $F$-free} if it does not contain
$F$, and for a family of graphs ${\cal F}$, $G$ is {\em ${\cal
F}$-free} if $G$ is $F$-free for every $F\in {\cal F}$.  The class of perfect
graphs was introduced by Claude Berge {\cite{ber60}},
and became a class of central importance in graph theory.
Berge conjectured that a graph is perfect if and only if it does not contain
an odd hole or an odd antihole. This  question (the Strong
Perfect Graph Conjecture) was solved by Chudnovsky, Robertson, Seymour
and Thomas \cite{CRST}.
Moreover, Chudnovsky, Cornu\'ejols, Liu, Seymour and Vu\v{s}kovi\'c
\cite{CCLSV} devised a polynomial-time algorithm that determines if a
graph is perfect.

The \textsc{Maximum Independent Set} (\textsc{MIS})
is the problem of finding an independent set of maximum cardinality in a graph,
and the \textsc{Maximum Clique} (\textsc{MC}) is the problem of
finding a clique of maximum cardinality. Similarly, given a graph with
non-negative weights on its vertices, \textsc{Maximum Weight Independent Set} (\textsc{MWIS}) is  the problem of finding an independent set of maximum total weight, and \textsc{Maximum Weight Clique} (\textsc{MWC}) is the problem of finding a clique on maximum total weight.

It is known that the \textsc{Maximum Independent Set} (\textsc{MIS}),
\textsc{Maximum Weight Independent Set} (\textsc{MWIS}),
\textsc{Maximum Clique } (\textsc{MC}), and
\textsc{Maximum Weight Clique} (\textsc{MWC})  problems can be solved in 
 polynomial time on perfect graphs using the algorithm of Gr\"otschel, Lov\'asz and
Schrijver \cite{GLS}.  This algorithm however is not
combinatorial and uses the ellipsoid method.
We do not have a precise definition of a ``combinatorial algorithm''
(though a good  rule of thumb  is: an algorithm not using division),
but roughly  we mean an algorithm that can be described as a sequence of operations applied directly to vertices and edges of the graph in question, without using
techniques such as the ellipsoid method, the simplex method, etc.
No
combinatorial polynomial-time algorithm is known for any of the above problems
in perfect  graphs; finding one is a major open problem in the field.
At the moment we do not even have a polynomial-time
combinatorial algorithm to solve \textsc{MIS} in perfect graphs with no hole of length four.  Another important special case is the \textsc{MC} problem
for perfect graphs with no long antiholes. By taking complements, the latter
question
is a special case of solving \textsc{MIS} in the class of long-hole-free graphs.
At the time of submission of this paper the latter was an open problem.
It has since been solved by two of us and Tara Abrishami \cite{ACP}, however
that algorithm is conceptually much more complicated than the special case
presented in this paper. Also, since the $k$-prism has exponentially many
separators, the separator enumeration result presented in this paper is in  a
sense best possible.

We denote by $P_t$  the path on $t$ vertices. The {\em length} of a path is
the number of edges in it (thus the length of $P_t$ is $t-1$).
Recently significant progress on the question of the complexity of
\textsc{MWIS} was made using the approach of ``potential maximal cliques
`` (PMCs; we will give a precise defintion later in the paper) that was
originally developed by Bouchitt\'{e} and
Todinca~\cite{BouchitteT01,BouchitteT02}. A milestone result was obtained in 2014 by Lokshtanov, Vatshelle, and Villanger~\cite{LokshtanovVV14}
who designed a polynomial-time algorithm for \textsc{MWIS} in $P_5$-free graphs.
Within the same framework, recently Grzesik et al.~\cite{GrzesikKPP19} showed polynomial-time algorithm for \textsc{MWIS} in $P_6$-free graphs.

The starting point of this paper was to try and apply this powerful technique
to various subclasses of perfect graphs. However, our main results are 
about a class of graphs that contains both perfect and imperfect graphs,
and contains an interesting subclass of perfect graphs,
as follows. For an integer $k>0$ the {\em $k$-prism} is the graph
consisting of
two cliques of size $k$, and a $k$-edge matching between them.
More precisely, the $k$-prism $G$ has   vertex set
$\{a_1, \ldots, a_k, b_1, \ldots, b_k\}$;  each of
the sets $\{a_1, \ldots, a_k\}$ and $\{b_1, \ldots, b_k\}$ is a clique,
$a_ib_i \in E(G)$ for every $i \in \{1, \ldots, k\}$, and there are no other
edges in $G$. Our first result is the following:

\begin{theorem}\label{thm:kprism-alg}
For every integer $k>0$ 
the \textsc{Maximum Weighted Independent Set} problem in a
(long-hole, $k$-prism)-free
$n$-vertex graph $G$ can be solved in time $n^{\Oh(k)}$.
\end{theorem}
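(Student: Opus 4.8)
The plan is to reduce \textsc{MWIS} on $G$ to the enumeration of the minimal separators of $G$, and then to run the potential‑maximal‑clique dynamic programming of Bouchitt\'{e} and Todinca~\cite{BouchitteT01,BouchitteT02} (the framework used, e.g., by~\cite{LokshtanovVV14,GrzesikKPP19} for $P_5$‑ and $P_6$‑free graphs). Two facts from that theory are used as black boxes: given the family $\Delta_G$ of all minimal separators of $G$, one can list in time polynomial in $n+|\Delta_G|$ a family $\Pi_G$ of size polynomial in $n+|\Delta_G|$ that contains every potential maximal clique of $G$; and given $\Pi_G$, one can solve \textsc{MWIS} in time polynomial in $n+|\Pi_G|$, where nonnegativity of the weights is used. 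Consequently it suffices to prove the structural statement that a (long‑hole, $k$‑prism)‑free graph on $n$ vertices has at most $n^{\Oh(k)}$ minimal separators, all of which can be enumerated in time $n^{\Oh(k)}$ (and forbidding some fixed prism is necessary here: the $k$‑prism is long‑hole‑free and has $2^{\Omega(k)}$ minimal separators). Chaining this with the two black boxes yields the theorem.

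To enumerate the minimal separators I would attach to each of them a short \emph{fingerprint}. Let $S$ be a minimal separator and let $C,D$ be two full components of $G-S$, so $N_G(C)=N_G(D)=S$. The key claim, discussed below, is that $S$ admits a \emph{one‑sided dominating set of bounded size}: a set $W\subseteq C$ with $|W|=\Oh(k)$ such that every vertex of $S$ has a neighbour in $W$. Granting this, observe that every neighbour of a vertex of $C$ lies in $C\cup S$, so $N_G[W]\subseteq C\cup S$ is disjoint from $D$, while all of $S\subseteq N_G(W)\subseteq N_G[W]$ is deleted in $G-N_G[W]$; since $D$ is connected and all of its outside neighbours (namely $S$) have been deleted, $D$ is exactly one connected component of $G-N_G[W]$, and $S=N_G(D)$ is recovered from the pair $(W,y)$ for any $y\in D$. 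There are only $n^{\Oh(k)}$ such pairs, and for each pair one can in polynomial time compute the component of $G-N_G[W]$ containing $y$, take its neighbourhood, and test whether the result is a minimal separator; this produces $\Delta_G$ within the required time bound.

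It remains to prove the one‑sided domination claim, and this is the step I expect to be the crux. Suppose, for a suitable constant $c$, that $S$ admits no one‑sided dominating set from $C$ of size at most $ck$. Take a \emph{minimum} one‑sided dominating set $W^\ast\subseteq C$; by its minimality every $w\in W^\ast$ has a \emph{private} neighbour $s_w\in S$, i.e.\ a vertex of $S$ whose only neighbour in $W^\ast$ is $w$. This yields an \emph{induced matching pattern} of size $|W^\ast|>ck$: vertices $w_1,\dots,w_m$ in $C$ and $s_1,\dots,s_m$ in $S$ with $w_is_i\in E(G)$ and $w_is_j\notin E(G)$ whenever $i\neq j$; moreover each $s_i$, lying in $S=N_G(D)$, has a neighbour $d_i\in D$. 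The aim is to show that for $m$ linear in $k$ this pattern already contains one of the two forbidden configurations.

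Long‑hole‑freeness severely restricts the bipartite adjacency between $C$ and $S$ (roughly, it is $2K_2$‑free after an appropriate refinement, essentially because two ``parallel'' non‑edges there close up, through shortest paths in the connected sets $C$ and $D$ and after a minimal‑counterexample chord cleanup, into an induced cycle of length at least five); this linearly pre‑orders the neighbourhoods involved. On top of this structure one runs a short case analysis on the adjacencies among $\{s_i\}$ (and, correspondingly, among the $w_i$ and the $d_i$): in the ``complete'' regime a clique among the $s_i$'s and a clique among the matched $w_i$'s, joined by the matching $\{w_is_i\}$, form an $m$‑prism, contradicting $k$‑prism‑freeness once $m\ge k$; in the ``anticomplete'' regime the induced paths $w_i\,s_i\,s_j\,w_j$ (or $s_i\,w_i\,w_j\,s_j$) are closed up through $C$ or $D$ into a long hole. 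The delicate point—and the reason the exponent should stay $\Oh(k)$ rather than becoming a Ramsey‑type (exponential‑in‑$k$) function—is to order the cases and choose the $w_i$ and $d_i$ so that the threshold on $m$ forcing a contradiction is only linear in $k$, avoiding any naive Ramsey argument on the $w_i$'s; I expect this to hinge on the chain‑graph structure extracted from long‑hole‑freeness, which collapses each ``piece'' of $S$ to a single dominator, together with $k$‑prism‑freeness, which bounds the number of such pieces by $\Oh(k)$.
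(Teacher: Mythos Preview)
Your high-level plan matches the paper exactly: bound the number of minimal separators by $n^{\Oh(k)}$, enumerate them, lift to potential maximal cliques via~\cite{BouchitteT01,BouchitteT02}, and run the dynamic program. Your fingerprinting reduction is also fine: once you know that every minimal separator $S$ with full component $C$ is dominated from $C$ by at most $k$ vertices, the enumeration is immediate. That domination statement is in fact true and is essentially what the paper proves as Lemma~\ref{lem:kprism}.

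The gap is in your proof of the domination claim. You take an \emph{arbitrary} minimum dominating set $W^\ast\subseteq C$ with private neighbours $s_1,\dots,s_m\in S$ and then assert that, after a case split, either the $w_i$'s and the $s_i$'s are both cliques (giving a prism) or you can close up a long hole. But nothing in your setup forces the $w_i$'s to be pairwise adjacent: they are just some vertices of $C$, possibly far apart, and a shortest $w_i$--$w_j$ path inside $C$ may have internal vertices adjacent to $s_i$ or $s_j$, killing the intended hole. Your appeal to a ``chain-graph structure'' between $C$ and $S$ is not enough; that structure only materialises after you localise the dominators, which you have not done. So the ``complete regime $\Rightarrow$ prism'' step and the ``anticomplete regime $\Rightarrow$ long hole'' step are both unjustified as written, and without them you only get a Ramsey-type bound $n^{2^{\Oh(k)}}$, not $n^{\Oh(k)}$.

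The paper fixes exactly this by \emph{not} taking an arbitrary minimum dominating set. It first picks a vertex $v\in C$ for which no component of $G[C]-\{v\}$ sees all of $S$ (such a $v$ always exists by shrinking $C$ to a minimal connected subset with $N(\cdot)=S$). It then shows, using long-hole-freeness, that the components of $G[C]-\{v\}$ have \emph{nested} neighbourhoods in $S\setminus N(v)$, so a single component $A_1$ already sees all of $S\setminus N(v)$, and moreover every vertex of $S\setminus N(v)$ has a neighbour in $N(v)\cap A_1$. Only now does it take a minimal dominating set $Z'\subseteq N(v)\cap A_1$ of $S\setminus N(v)$. The point of the restriction is that any two $z^1,z^2\in Z'$ share the common neighbour $v$; if $z^1z^2\notin E(G)$ one routes through $v$, and if $f(z^1)f(z^2)\notin E(G)$ one routes through the other full side $B$, in either case producing a long hole. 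Hence both $Z'$ and $\{f(z):z\in Z'\}$ are cliques, so $G[Z'\cup f(Z')]$ is a $|Z'|$-prism and $|Z'|<k$. Your argument is missing precisely this anchoring at a common neighbour $v$ inside a single component $A_1$; once you add it, your fingerprinting finish works (and is a slightly cleaner alternative to the paper's inductive counting via the potential $\poten_G$).
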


We do not see how to turn Theorem~\ref{thm:kprism-alg} into an FPT-algorithm
(withouth using the resut of \cite{ACP}).
Since  prisms (which is a structure more general than a $k$-prism, and we will not define it here) come up naturally in the context of perfect graphs \cite{CRST}, the following corollary, obtained by taking complements, is of interest:

\begin{theorem}\label{thm:kprism-perfect}
Let $k>0$ be an integer.
Let $G$ be an $n$-vertex perfect graph with no long antihole, and such that
the complement of $G$ does not contain the $k$-prism.
Then the  \textsc{Maximum Weighted Clique} problem  in $G$
can be solved in time $n^{\Oh(k)}$.
In particular, the \textsc{Maximum Weighted Clique} problem in a perfect
$n$-vertex graph $G$ with no long antihole and no hole of length $6$
can be solved in time $n^{\Oh(1)}$.
\end{theorem}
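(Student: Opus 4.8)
The plan is to deduce Theorem~\ref{thm:kprism-perfect} from Theorem~\ref{thm:kprism-alg} by passing to the complement graph. Write $\overline{G}$ for the complement of $G$. A set $S \subseteq V(G)$ is a clique of $G$ if and only if it is an independent set of $\overline{G}$, and this correspondence does not change the total weight of $S$; hence an optimal solution to \textsc{MWC} in $G$ is exactly an optimal solution to \textsc{MWIS} in $\overline{G}$ with the same vertex weights. Since $\overline{G}$ can be computed from $G$ in time $\Oh(n^2)$, it therefore suffices to check that $\overline{G}$ satisfies the hypotheses of Theorem~\ref{thm:kprism-alg}, i.e.\ that $\overline{G}$ is $(\text{long-hole}, k\text{-prism})$-free, and then to invoke that theorem; the bound $n^{\Oh(k)}$ follows.

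The $k$-prism-freeness of $\overline{G}$ is exactly the second hypothesis of the theorem. For long-hole-freeness I would argue as follows: if $C$ is an induced cycle of length $\ell \ge 5$ in $\overline{G}$, then the vertex set of $C$ induces in $G$ the complement of $C_\ell$, that is, a long antihole of $G$, contradicting the assumption that $G$ has no long antihole. Hence $\overline{G}$ has no long hole, Theorem~\ref{thm:kprism-alg} applies to $\overline{G}$, and \textsc{MWC} in $G$ is solved in time $n^{\Oh(k)}$. (Perfection of $G$ is not actually used in this argument; it is included in the statement only because the motivation comes from perfect graphs and the ``long antihole'' hypothesis is natural there.)

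For the ``in particular'' clause, the one additional observation is that the $3$-prism is isomorphic to $\overline{C_6}$: labelling the vertices of $C_6$ cyclically as $1,\ldots,6$, in $\overline{C_6}$ the triples $\{1,3,5\}$ and $\{2,4,6\}$ each induce a triangle, every vertex $i$ of the first triangle has a unique neighbour in the second one (namely $i+3 \bmod 6$), and $\overline{C_6}$ has no other edges — which is precisely the definition of the $3$-prism. Consequently $\overline{G}$ contains the $3$-prism as an induced subgraph if and only if $G$ contains an induced $C_6$, i.e.\ $G$ has a hole of length $6$. Thus, if $G$ is perfect, has no long antihole, and has no hole of length $6$, then $\overline{G}$ is $(\text{long-hole}, 3\text{-prism})$-free, and the first part of the theorem applied with $k=3$ yields running time $n^{\Oh(3)} = n^{\Oh(1)}$.

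There is essentially no obstacle here: the statement is a genuine corollary, and the only points requiring care are the weight-preserving clique/independent-set correspondence under complementation, the fact that ``long'' has the same threshold (length at least $5$) for holes and for antiholes — so that $G$ being long-antihole-free is equivalent to $\overline{G}$ being long-hole-free, with no leftover small cases — and the elementary identification of the $3$-prism with $\overline{C_6}$, which is exactly why a hole of length $6$ (not excluded by long-antihole-freeness) has to be forbidden separately.
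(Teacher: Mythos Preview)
Your argument is correct and matches the paper's approach exactly: the paper simply states that the theorem is obtained from Theorem~\ref{thm:kprism-alg} by taking complements, and that the ``in particular'' clause follows because the complement of the $3$-prism is $C_6$. You have filled in precisely the details the paper leaves implicit, including the observation that perfection is not actually needed for the reduction.
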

The last statement of Theorem~\ref{thm:kprism-perfect} follows from the fact that the complement of a the $3$-prism is the cycle of length $6$.

The algorithm of Theorem~\ref{thm:kprism-alg} easily implies a subexponential algorithm for \text{MWIS} in long-hole-free graphs, as we now explain.
\begin{theorem}\label{thm:subexp1}
The \textsc{Maximum Weighted Independent Set} problem in a long-hole-free $n$-vertex graph $G$
can be solved in time $n^{\Oh(\sqrt{n})}$.
\end{theorem}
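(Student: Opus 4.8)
The plan is a win/win recursion built on top of Theorem~\ref{thm:kprism-alg}, invoked with $k$ set to roughly $\sqrt{n}$, together with brute-force detection of induced $k$-prisms. Fix the input graph $G$ with $n=|V(G)|$ and put $k=\lceil\sqrt{n}\rceil$. First I would search for an induced $k$-prism in $G$ by iterating over all $2k$-element subsets of $V(G)$ and checking, in polynomial time, whether any of them induces a $k$-prism; this takes $\binom{n}{2k}\cdot n^{\Oh(1)}=n^{\Oh(\sqrt{n})}$ time, which is within budget. If no induced $k$-prism is found, then $G$ is $($long-hole, $k$-prism$)$-free, and Theorem~\ref{thm:kprism-alg} solves \textsc{MWIS} on $G$ in time $n^{\Oh(k)}=n^{\Oh(\sqrt{n})}$; we are done in this case.

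So suppose the search returns an induced $k$-prism $H$ in $G$ with cliques $A=\{a_1,\dots,a_k\}$ and $B=\{b_1,\dots,b_k\}$. The key combinatorial observation is that every independent set $I$ of $G$ meets the clique $A$ in at most one vertex and meets the clique $B$ in at most one vertex, so $I\cap V(H)$ is one of at most $(k+1)^2=\Oh(n)$ sets (the empty set, a singleton, or a pair $\{a_i,b_j\}$ with $i\neq j$). I would branch over all of these: for each candidate $S\subseteq V(H)$ that is independent in $H$, guess that an optimal solution $I$ satisfies $I\cap V(H)=S$, record $\weight(S)$, delete from $G$ all of $V(H)$ together with $\bigcup_{v\in S}N(v)$ (every vertex of $V(H)\setminus S$ is then known to be outside $I$, and so is every neighbour of $S$), and recurse on the resulting induced subgraph $G'$, which is again long-hole-free and has at most $n-2k$ vertices. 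Returning the maximum over all branches of $\weight(S)$ plus the value returned for $G'$ solves \textsc{MWIS} on $G$ correctly, since on the one hand any independent set of $G'$ extended by the corresponding $S$ is an independent set of $G$, and on the other hand an optimal $I$ is produced in the branch $S=I\cap V(H)$.

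For the running time, let $T(n)$ be the worst-case time on an $n$-vertex instance. Each recursive call first spends $n^{\Oh(\sqrt{n})}$ time on prism detection (and, in the terminating case, on invoking Theorem~\ref{thm:kprism-alg}); otherwise it branches into $\Oh(n)$ subinstances, each on at most $n-2\lceil\sqrt{n}\rceil$ vertices. A short induction bounds the recursion depth $d(n)$ by $\sqrt{n}$: from $d(n)\le 1+d(n-2\lceil\sqrt{n}\rceil)$ and $1+\sqrt{n-2\sqrt{n}}\le\sqrt{n}$ (squaring the claimed inequality $\sqrt{n-2\sqrt{n}}\le\sqrt{n}-1$ gives $n-2\sqrt{n}\le n-2\sqrt{n}+1$), using the small values of $n$ with $2\lceil\sqrt{n}\rceil\ge n$ as base cases. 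Hence the recursion tree has at most $\Oh(n)^{\sqrt{n}}=n^{\Oh(\sqrt{n})}$ leaves, each node does $n^{\Oh(\sqrt{n})}$ work, and $T(n)=n^{\Oh(\sqrt{n})}$ overall.

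I do not expect a genuine obstacle: the subexponential bound is a soft consequence of the polynomial bound of Theorem~\ref{thm:kprism-alg}, once one observes that induced $k$-prisms can be detected by brute force within an $n^{\Oh(\sqrt{n})}$ budget when $k=\Theta(\sqrt{n})$ and that they are too ``independent-set-poor'' to survive in many branches. The only point that needs care is choosing the threshold $k=\lceil\sqrt{n}\rceil$ so as to balance the two contributions to the running time — the $n^{\Oh(k)}$ cost of Theorem~\ref{thm:kprism-alg} and the $\Oh(n)^{n/(2k)}$-type blow-up of the branching — and then verifying the recursion-depth estimate down to the base cases.
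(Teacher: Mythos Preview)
Your proposal is correct and follows essentially the same approach as the paper's own proof: set $k\approx\sqrt{n}$, brute-force search for an induced $k$-prism, branch on its (at most $\Oh(n)$) independent subsets and recurse on a graph with $\Theta(\sqrt{n})$ fewer vertices, and otherwise invoke Theorem~\ref{thm:kprism-alg}. The only difference is that you spell out the recursion-depth estimate explicitly where the paper simply appeals to ``standard analysis.''
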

\begin{proof}
Set $k = \lfloor \sqrt{n} \rfloor$ and check (by exhaustive enumeration) if $G$ contains the $k$-prism as an induced subgraph.
If such a prism $P$ has been found, then branch into $\Oh(n)$ subcases guessing $V(P) \cap I$ for the sought
optimum independent set $I$ (since $P$ consists of two cliques, it intersects with any independent set in at most two vertices and
$|V(P)| = 2\lfloor \sqrt{n} \rfloor $).
In every branch, delete from the graph $V(P) \cup N(V(P) \cap I)$ for the guessed value of $V(P) \cap I$ and recurse;
since $|V(P)| \geq 2k$, the number of vertices in the graph drops by at least $2\lfloor \sqrt{n} \rfloor$. 
Otherwise, if no such $P$ is found, apply the algorithm of Theorem~\ref{thm:kprism-alg}, which now runs in time
$n^{\Oh(\sqrt{n})}$. 
Standard analysis shows that this algorithm has running time bound $n^{\Oh(\sqrt{n})}$. 
\end{proof}

Recently, two groups of authors~\cite{BacsoLMPTL19,GORSSS18} reported a subexponential-time algorithm for \textsc{MWIS} in a related
class of $P_t$-free graphs for every fixed $t$.  Their result depends
heavily on the notion of ``bounded balanced separators'', which we
explain next.
A balanced separator for a
graph $G$ and a weight function $\weight : V(G) \to [0,+\infty)$ is 
a set of vertices $X \subseteq V(G)$ such that every connected component $C$ of $G-X$ has total weight (w.r.t. $\weight$)
at most half of the total weight of $V(G)$. We say that a graph class $\mathcal{G}$ has \emph{balanced separators bounded
by $f$} if for every $G \in \mathcal{G}$ and every weight function $\weight : V(G) \to [0,+\infty)$ there exists
a balanced separator for $G$ and $\weight$ of size at most $f(G)$. 
The main technical statement of~\cite{BacsoLMPTL19} is that a $P_t$-free graph $G$ admits a balanced separator of size
bounded by $(t-1)\Delta(G) +1$ where $\Delta(G)$ is the maximum degree in $G$.
Our second result is a similar statement for long-hole-free graphs, which we believe is of independent interest.
\begin{theorem}\label{thm:seps}
For every long-hole-free graph $G$ and every weight function $\weight : V(G) \to [0,+\infty)$
there exists a balanced separator of $G$ and $\weight$ of size at most $3(\Delta(G)+1)$.
\end{theorem}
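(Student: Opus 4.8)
The plan is to adapt the $P_t$-free argument quoted above to the setting of long holes, producing a balanced separator of the form $N[v_1]\cup N[v_2]\cup N[v_3]$ for three carefully chosen vertices; such a set has size at most $3(\Delta(G)+1)$, so the crux is that \emph{three} closed neighbourhoods always suffice. I would first dispose of the routine reductions. If $G$ is disconnected, then either some connected component has weight more than $\tfrac12\weight(V(G))$, in which case any balanced separator of that component (with respect to $\weight$) is a balanced separator of $G$, or no component is that heavy, in which case $\emptyset$ works; so we may assume $G$ is connected. If some vertex $v$ satisfies $\weight(v)>\tfrac12\weight(V(G))$, then $\{v\}$ is already a balanced separator, since every component of $G-v$ has total weight less than $\tfrac12\weight(V(G))$; so we may further assume $W:=\weight(V(G))>0$ and $\weight(v)\le W/2$ for every vertex $v$.

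The core is a greedy ``peeling'' process. Starting from $X_0=\emptyset$, as long as $G-X_j$ has a connected component of weight more than $W/2$ --- necessarily unique, call it $D_j$ --- we pick a vertex $v_{j+1}\in D_j$ and put $X_{j+1}=X_j\cup N[v_{j+1}]$; the first vertex $v_1$ may be chosen arbitrarily. Since any heavy component $D_{j+1}$ of $G-X_{j+1}$ is connected and avoids $X_j$ it lies inside $D_j$, and since it avoids $N[v_{j+1}]\ni v_{j+1}$ while $v_{j+1}\in D_j$, in fact $D_{j+1}\subsetneq D_j$; thus the process strictly shrinks the heavy component, and the moment no heavy component remains, $X_j$ is the desired balanced separator. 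Because $D_j$ is a component of $G-X_j$, its neighbourhood $N(D_j)$ is contained in $X_j$; I would strengthen this to the invariant that $N(D_j)\subseteq N(v_j)$, i.e.\ $D_j$ is joined to the rest of the graph only through the neighbourhood of the \emph{last} peeled vertex. This holds trivially for $j=1$, and the delicate part of each subsequent step is to choose $v_{j+1}\in D_j$ --- morally, a vertex that dominates, from inside $D_j$, the portion of $N(v_j)$ still seen by $D_{j+1}$ --- so that the invariant is restored.

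The main obstacle, and the only place the hypothesis that $G$ has no long hole is used, is to prove that the process halts after at most three peels. Suppose for contradiction that $v_1,v_2,v_3$ have been chosen and $D_3$ is still heavy. Then $D_3$ is a nonempty connected set avoiding $N[v_1]\cup N[v_2]\cup N[v_3]$, and by the invariant each $D_j$ reaches the outside world only through the ``window'' $N(v_j)$. I would then build a long induced cycle by threading shortest paths through these three windows: since $G$ is connected, a shortest path from $v_1$ into $D_3$ must first leave $v_1$ through $N(v_1)$, then enter $D_2$ through $N(v_2)$, then enter $D_3$ through $N(v_3)$, and one analyses how $v_1,v_2,v_3$ attach to this path, and how it can be closed up using the connectivity of $D_3$, so as to extract a chordless cycle of length at least five --- contradicting long-hole-freeness. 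Making this rigorous is where the real work lies: one must choose $v_1,v_2,v_3$ with enough minimality (shortest entry into each $D_j$, genuinely shortest return paths) to kill every potential chord, and then push through the short case analysis of how the ``spine'' $v_1,v_2,v_3$, the path, and $D_3$ interact; the same small-cycle arguments should also be what guarantees, in each of the at most three rounds, that a vertex $v_{j+1}$ restoring the window invariant exists. Finally, I would remark that the naive alternative of running a BFS from a root and using the median layer --- which is automatically a balanced separator, since the parts strictly above and strictly below it are light --- does not yield the size bound, because a single BFS layer can require arbitrarily many vertices to dominate, already in trees; hence ``thick'' peeling of whole neighbourhoods, rather than a thin layer, is essential.
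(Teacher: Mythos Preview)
Your approach is fundamentally different from the paper's, and as stated it does not go through. The paper does not peel neighbourhoods at all: it first proves a purely structural fact, Theorem~\ref{thm:3dom}, that every potential maximal clique $\Omega$ of a long-hole-free graph is contained in $N[Z]$ for some set $Z$ of at most three vertices. This is established through a chain of lemmas about minimal separators (any independent subset of a minimal separator has a common neighbour in each full side; for every $x$ in a minimal separator $S$ there exist $a,b$ in the two full sides with $S\subseteq N[x]\cup N(a)\cup N(b)$; and analogues for PMCs), culminating in Lemma~\ref{lem:lh:pmc-dom}. Theorem~\ref{thm:seps} is then immediate: take a clique tree of any minimal chordal completion, orient each edge towards the heavier side, and pick a sink bag; that bag is a balanced separator and a PMC, hence dominated by three closed neighbourhoods.

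The concrete gap in your argument is the step ``if $D_3$ is still heavy, build a long hole''. Take $G=P_n$ with uniform weights and $v_1$ an endpoint. Then $D_1=\{v_3,\dots,v_n\}$ with $N(D_1)=\{v_2\}\subseteq N(v_1)$. Choosing the second and third peeled vertices to be $v_3$ and $v_5$ maintains your window invariant perfectly (each $N(D_j)$ is a single vertex lying in the neighbourhood of the last peeled vertex), yet $D_3=\{v_7,\dots,v_n\}$ is heavy for large $n$ and $G$ is a tree --- there is no hole of any length. So the promised contradiction does not materialise: the window invariant together with three peels is simply not enough to force a cycle, and ``closing up using the connectivity of $D_3$'' is impossible when $D_3$ hangs off a cut vertex. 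To rescue the scheme you would have to specify \emph{which} $v_{j+1}$ to take (not merely one that preserves the invariant) and prove that this particular choice terminates in three rounds; but that is an existential statement, not one you can get by contradiction, and you have given no mechanism for making progress. In effect the ``real work'' you defer --- selecting the right $v_{j+1}$ and controlling chords --- is the entire content of the theorem, and the paper's separator/PMC lemmas are precisely what fill that gap.
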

Standard arguments (see e.g.~\cite{BacsoLMPTL19}) show that if a graph class $\mathcal{G}$ has balanced separators bounded by $f$
then the treewidth of a graph $G \in \mathcal{G}$ is bounded by $\Oh(f(G))$ and, if a balanced separator of size at most $f(G)$
for given $G$ and $\weight$ can be found in polynomial time, so can a tree decomposition of width $\Oh(f(G))$.
In~\cite{BacsoLMPTL19}  a subexponential algorithm for \textsc{MWIS} in a $P_t$-free $n$-vertex graph 
with running time bound $2^{\Oh(\sqrt{t n \log n})}$ is obtained by 
first setting a threshold $\tau = \sqrt{n \log n / t}$, branching exhaustively on vertices of degree at least $\tau$
and, once the maximum degree drops below this threshold, by computing a tree decomposition of width $\Oh(\sqrt{t n \log n})$
and solving \textsc{MWIS} by a dynamic programming algorithm on this tree decomposition.
Following exactly the same strategy with threshold $\tau = \sqrt{n \log n}$ we obtain the following.
\begin{theorem}\label{thm:subexp2}
The \textsc{Maximum Weighted Independent Set} problem on a long-hole $n$-vertex graph $G$
can be solved in time $2^{\Oh(\sqrt{n \log n})}$.
\end{theorem}

\paragraph{Organization}
In Section  \ref{sec:prelims} we explain the general framework of potential
maximal cliques. In Section \ref{sec:kprism} we prove  
Theorem~\ref{thm:kprism-alg}, and finally in
Section~\ref{sec:3dom} we prove  Theorem~\ref{thm:seps}.

\section{Separators and potential maximal cliques}\label{sec:prelims}

Let $G$ be a graph and let $X \subseteq V(G)$. We denote by $G[X]$ the sugbraph of $G$ induced by $X$ and by $G-X$ the graph $G[V(G) \setminus X]$.
A {\em component} of $X$ (or of $G[X]$)  is the vertex
set of a maximal connected subgraph of $G[X]$. We write $\cc(G)$ to mean the set
of connected components of $G$. We denote by $N(X)$ the set of vertices
of $V(G) \setminus X$ with a neighbor in $X$, and  write $N[X]=N(X) \cup X$.
When $X=\{x\}$  we use the notation $N(x)$ (or $N[x]$) instead of $N(\{x\})$
(or $N[\{x\}]$). For $u,v \in V(G)$ a {\em path from $u$ to $v$  via $X$}
is a path with ends $u,v$ and such that all of its internal vertices
belong to $X$. Observe that if $uv \in E(G)$ and $u,v \not \in X$,
the a path from $u$ to $v$ via $X$ is disjoint from $X$.
For $F \subseteq \binom{V(G)}{2} \setminus E(G)$ the graph $G+F$ has vertex set $V(G)$ and edge set $E(G) \cup F$.

A graph is {\em chordal} if it has no holes.
A set $F \subseteq \binom{V(G)}{2} \setminus E(G)$ is a \emph{fill-in}
or a \emph{chordal completion} (of $G$) 
if $G+F$ is a chordal graph. A fill-in $F$ is \emph{minimal} if it is
inclusion-wise minimal.

Let $X \subseteq V(G)$. For $s,t \in V(G) \setminus X$, we say that $X$
is an \emph{$s,t$-separator} if $s$ and $t$ lie in different connected
components of $G-X$. An $s,t$-separator is a \emph{minimal $s,t$-separator} if it is an inclusion-wise minimal $s,t$-separator. $X$ is said to be a 
\emph{minimal separator} if there exist $s,t \in V(G)$ such that $X$
is a minimal $s,t$-separator in $G$. We say that
$D \in \cc(G -X)$ is a {\em full component for $X$} if $N(D)=X$.
It is easy to see  that:

\begin{lemma}\label{twofull}
$X$  is a minimal separator if and only if at least
two members of $\cc(G-X)$ are full components.
\end{lemma}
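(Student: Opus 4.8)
The plan is to prove the two implications separately, in each case simply chasing the definitions together with an elementary reachability argument inside $G-X$.

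For the forward direction, suppose $X$ is a minimal separator, witnessed by $s,t \in V(G) \setminus X$ for which $X$ is an inclusion-wise minimal $s,t$-separator. Let $C_s, C_t \in \cc(G-X)$ be the components containing $s$ and $t$; these are distinct because $X$ separates $s$ from $t$. I claim $C_s$ is a full component, the argument for $C_t$ being symmetric. If not, pick $x \in X \setminus N(C_s)$; I will show that $X \setminus \{x\}$ is still an $s,t$-separator, contradicting minimality of $X$. The key observation is that every neighbour of a vertex of $C_s$ lies in $C_s \cup N(C_s) \subseteq C_s \cup (X \setminus \{x\})$, since $x \notin N(C_s)$. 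Hence any walk starting at $s$ in the graph $G - (X \setminus \{x\})$ can only move within $C_s$, so it never leaves $C_s$ and in particular never reaches $t \notin C_s$. This "useless vertex" step is the only one with any content, and it is routine.

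For the backward direction, suppose $C_1, C_2 \in \cc(G-X)$ are distinct full components, so $N(C_1) = N(C_2) = X$. Pick $s \in C_1$ and $t \in C_2$; then $X$ is an $s,t$-separator because $C_1 \neq C_2$. To see it is minimal, take any $X' \subsetneq X$ and any $x \in X \setminus X'$. Since $x \in N(C_1) = X$ it has a neighbour $u_1 \in C_1$, and since $x \in N(C_2)$ it has a neighbour $u_2 \in C_2$. Concatenating a path from $s$ to $u_1$ inside the connected set $C_1$, the edge $u_1x$, the edge $xu_2$, and a path from $u_2$ to $t$ inside $C_2$ produces a path from $s$ to $t$ whose vertex set is contained in $C_1 \cup \{x\} \cup C_2$, and is therefore disjoint from $X' \subseteq X \setminus \{x\}$. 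Thus $X'$ is not an $s,t$-separator, so $X$ is a minimal $s,t$-separator and hence a minimal separator.

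I do not anticipate any genuine obstacle: this is a standard characterisation and both directions reduce to tracing through the definitions. The only points requiring a little care are the reachability claim in the forward direction (that a vertex of $X$ with no neighbour in $C_s$ is irrelevant for separating $s$ from $t$), and checking that the degenerate case $X = \emptyset$, which occurs exactly when $G$ is disconnected, is handled correctly by both implications.
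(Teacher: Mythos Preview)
Your argument is correct and is the standard proof of this well-known characterisation. The paper itself does not prove the lemma at all: it is introduced with ``It is easy to see that'' and left without proof, so your write-up is strictly more detailed than what the paper provides.
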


An important property of minimal separators is that no new minimal separator appears when a minimal fill-in is added to a graph. More precisely:
\begin{proposition}[\cite{BouchitteT01}]
  Let $G$ be  a graph and let $F$ be a minimal fill-in for $G$. If $X$
  is a minimal separator of $G+F$, then $X$ is a minimal separator of $G$.
    Furthermore, $\cc(G+F-X)=\cc(G-X)$.
\end{proposition}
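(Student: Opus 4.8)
The plan is to keep the given minimal fill-in $F$ fixed, write $H := G+F$, and prove both assertions with one recurring device: choose a suitable set $F^\star\subseteq F$ of fill edges that contains no edge inside $X$, show that $G+(F\setminus F^\star)$ is still chordal, and then invoke minimality of $F$ to conclude $F^\star=\emptyset$. Throughout I will use the classical fact that a minimal separator of a chordal graph is a clique (two non-adjacent vertices of it, joined by shortest paths through two full components, would span an induced cycle of length $\geq 4$); in particular $X$ is a clique of $H$, hence also of $G+(F\setminus F^\star)$ whenever $F^\star\cap\binom{V(G)}{2}$ contains no pair inside $X$, since deleting such edges removes no edge of $H$ with both ends in $X$.

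First I would establish $\cc(G+F-X)=\cc(G-X)$. Since $G-X$ is a spanning subgraph of $H-X$, each component of $G-X$ lies inside a component of $H-X$, so it suffices to show that no fill edge joins two distinct components of $G-X$. Let $F^\star$ be exactly the set of such edges; then $F^\star$ has no pair inside $X$, and after deleting $F^\star$ the only edges of $H$ with both ends outside $X$ that survive lie inside single components of $G-X$, so the components of $(G+(F\setminus F^\star))-X$ are precisely those of $G-X$. Suppose $G+(F\setminus F^\star)$ contained a chordless cycle $\Gamma$ of length $\geq 4$. Adding $F^\star$ back yields the chordal graph $H$, so some $ab\in F^\star$ is a chord of $\Gamma$; let $P,Q$ be the two $a$-$b$ arcs of $\Gamma$, each having an internal vertex since $ab\notin E(\Gamma)$. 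As $a,b$ lie in different components of $(G+(F\setminus F^\star))-X$, each of $P,Q$ meets $X$; picking $w_P\in V(P)\cap X$ and $w_Q\in V(Q)\cap X$, which are internal to $\Gamma$ because $a,b\notin X$, we obtain two distinct non-consecutive vertices of $\Gamma$ adjacent in $G+(F\setminus F^\star)$ since $X$ is a clique there — a chord, contradiction. Hence $G+(F\setminus F^\star)$ is chordal, so $F^\star=\emptyset$, which is the claim.

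It remains to show $X$ is a minimal separator of $G$. I would first prove $N_G(C)=N_H(C)$ for every component $C$ of $H-X$; only $N_H(C)\subseteq N_G(C)$ needs work. Suppose $x\in N_H(C)\setminus N_G(C)$ and set $F^\star:=\{xa\in F : a\in C\}$: this is non-empty (the edge witnessing $x\in N_H(C)$ is a fill edge, as $x\notin N_G(C)$) and contains no pair inside $X$. Every edge of $F^\star$ is incident to $x\in X$, so $(G+(F\setminus F^\star))-X$ equals $H-X$; in particular $C$ is still a component of $(G+(F\setminus F^\star))-X$, and $x$ now has no neighbor in $C$ at all. A hypothetical chordless cycle $\Gamma$ of length $\geq 4$ in $G+(F\setminus F^\star)$ acquires, upon re-adding $F^\star$, a chord $xa$ with $a\in C$; walking along each of the two $x$-$a$ arcs from $x$, the first vertex lying in $C$ is preceded by a vertex that is outside $C$ but adjacent to $C$, hence lies in $N_H(C)\subseteq X$ and is distinct from $x$ (as $x$ has no neighbor in $C$). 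This yields two distinct non-consecutive vertices of $\Gamma$ inside the clique $X$, a contradiction, so minimality gives $F^\star=\emptyset$, contradicting $x\in N_H(C)$. Finally, by Lemma~\ref{twofull} applied to $H$ there are two full components $A,B$ of $H-X$; by the previous paragraph they are components of $G-X$, and by what we just proved $N_G(A)=N_H(A)=X=N_H(B)=N_G(B)$, so $A,B$ are full components of $G-X$ and Lemma~\ref{twofull} applied to $G$ shows $X$ is a minimal separator of $G$.

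The connectivity bookkeeping — components of $G-X$ refining those of $H-X$, and the precise description of the components of $(G+(F\setminus F^\star))-X$ in each of the two applications — is routine. The step that needs genuine care, and is the heart of the matter, is common to both applications: one must guarantee that the re-added fill edge serving as a chord of $\Gamma$ has its two endpoints in \emph{different} components of the reduced graph after deleting $X$ (equivalently, that both arcs of $\Gamma$ are forced through $X$). This is exactly why $F^\star$ has to be chosen as it is; combined with the automatic persistence of $X$ as a clique (because $F^\star$ contains no pair inside $X$), it produces the contradictory chord and closes the argument.
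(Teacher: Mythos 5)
Your proof is correct. Note that the paper itself gives no proof of this proposition --- it is imported verbatim from Bouchitt\'{e} and Todinca --- so there is nothing to compare against except the original source; what you have written is a legitimate self-contained argument. The strategy (delete a carefully chosen subset $F^\star$ of fill edges avoiding $\binom{X}{2}$, show chordality survives, and invoke minimality of $F$ to force $F^\star=\emptyset$) is essentially the standard one, and you execute it cleanly: the preliminary fact that a minimal separator of a chordal graph is a clique is correctly justified via two full components, the identification of the components of $(G+(F\setminus F^\star))-X$ is right in both applications, and the chord-hunting step is sound --- in each case the two vertices you locate in $X$ are internal to the two distinct $a$--$b$ (resp.\ $x$--$a$) arcs of $\Gamma$, hence distinct and non-consecutive, so the clique $X$ (which persists after deleting $F^\star$ precisely because $F^\star$ contains no pair inside $X$) supplies the contradictory chord. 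The final assembly via Lemma~\ref{twofull} in both $G+F$ and $G$, using $N_G(C)=N_{G+F}(C)$ for the full components, correctly delivers that $X$ is a minimal separator of $G$. One small presentational remark: in the second application it is worth stating explicitly (as you implicitly use) that the predecessor of the first $C$-vertex on each arc cannot equal $a$ either, since it lies outside $C$; you do handle the only genuinely delicate case, namely $p\neq x$, via the observation that $x$ has lost all its neighbors in $C$.
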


A set $\pmc \subseteq V(G)$ is a \emph{potential maximal clique} (PMC) if there exists a minimal fill-in $F$ of $G$
such that $\pmc$ is a maximal (inclusion-wise) clique of $G+F$. A PMC is
surrounded by minimal separators in the following sense:

\begin{proposition}[\cite{BouchitteT01}]\label{prop:pmcnbrs}
  Let $G$ be a graph, let $\pmc \subseteq V(G)$  be a PMC of $G$, and let
  $D \in \cc(G-\pmc)$. Then $N(D)$ is 
  a minimal separator of $G$ and $D$ is a full component for $N(D)$.
\end{proposition}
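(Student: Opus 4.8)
The plan is to establish the two claims separately. The assertion that $D$ is a full component for $N(D)$ is purely formal and does not use anything about potential maximal cliques. First I would note that $N(D) \subseteq \pmc$: a vertex of $V(G) \setminus \pmc$ adjacent to $D$ would lie in the same connected component of $G - \pmc$ as $D$, hence in $D$ itself. Then, deleting only $N(D)$ from $G$ leaves $D$ connected and attaches no further vertex to $D$ (all neighbours of $D$ lie in $N(D)$), so $D \in \cc(G - N(D))$; since trivially $N(D) = N(D)$, $D$ is a full component for $N(D)$.

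For the claim that $S := N(D)$ is a minimal separator of $G$, I would fix a minimal fill-in $F$ of $G$ for which $\pmc$ is a maximal clique of the chordal graph $H := G + F$, and proceed in two steps: \emph{(i)} show that $D$ is also a connected component of $H - \pmc$ and that $N_H(D) = S$; and \emph{(ii)} show that $S$ is a minimal separator of $H$, which by the preceding proposition immediately gives that $S$ is a minimal separator of $G$. Step (i) is the place where the minimality of $F$ is genuinely used, and I expect it to be the main obstacle: it says precisely that a minimal triangulation does not ``see through'' the clique $\pmc$, that is, $F$ adds no edge from $D$ to a vertex outside $N[D]$ (equivalently $\cc(G-\pmc) = \cc(H-\pmc)$ with matching neighbourhoods). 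The route I would take is the structural description of minimal triangulations of Parra and Scheffler: $H$ is obtained from $G$ by turning a maximal pairwise-parallel family of minimal separators of $G$ into cliques, so every edge of $F$ lies inside some minimal separator of $G$. An edge of $F$ from $d \in D$ to $w \notin N[D]$ would then be contained in a minimal separator of $G$ whose two endpoints $d, w$ are separated in $G$ by the set $\pmc$ (if $w \notin \pmc$) or by the set $S$ (if $w \in \pmc \setminus S$) --- both of which are cliques of $H$ --- and this configuration can be ruled out. (Alternatively one can argue directly from the characterisation of irredundant fill-in edges: in a minimal fill-in every added edge $uv$ lies on an induced four-cycle of $(G+F) - uv$.)

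Granting step (i), I would finish entirely inside the chordal graph $H$, with $\pmc$ a maximal clique, $D$ a component of $H - \pmc$, and $S = N_H(D) \subseteq \pmc$. As in the first paragraph, $D$ is a connected component of $H - S$ with $N_H(D) = S$, so $D$ is a full component for $S$ in $H$. Next I would show $S \subsetneq \pmc$, i.e.\ that no component of $H - \pmc$ is full for the maximal clique $\pmc$: taking a clique tree of $H$, let $K'$ be the bag adjacent to $\pmc$ in the direction of $D$; if $S$ were all of $\pmc$ then every $v \in \pmc$ would have a neighbour in $D$, and the running-intersection property would force $v \in K'$, hence $\pmc \subseteq K'$ with $K' \ne \pmc$, contradicting the maximality of $\pmc$. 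So I may pick $v \in \pmc \setminus S$ and let $C^\ast$ be the connected component of $H - S$ containing $v$. Then $C^\ast \ne D$, since $D$ is itself a component of $H - S$ and $v \notin D$; and $N_H(C^\ast) = S$, since $N_H(C^\ast) \subseteq S$ (as $C^\ast$ is a component of $H - S$) while every $s \in S \subseteq \pmc$ is adjacent to $v \in C^\ast$ because $\pmc$ is a clique of $H$, so $S \subseteq N_H(C^\ast)$. Thus $D$ and $C^\ast$ are two distinct full components for $S$ in $H$, so by Lemma~\ref{twofull} $S$ is a minimal separator of $H$; by the preceding proposition, $S = N(D)$ is a minimal separator of $G$, which completes the proof.
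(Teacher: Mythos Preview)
The paper does not prove this proposition; it is quoted from Bouchitt\'e--Todinca~\cite{BouchitteT01} without proof, so there is no ``paper's own proof'' to compare against.

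Your plan is sound. The formal part (that $D$ is a full component for $N(D)$) is correct as written. For the separator part, your route through $H=G+F$ works, and you correctly identify step~(i) --- that $\cc(G-\pmc)=\cc(H-\pmc)$ with matching neighbourhoods --- as the only nontrivial point. Your sketch via Parra--Scheffler is a bit loose (the sentence about $d$ and $w$ being ``separated by a clique of $H$'' needs unpacking: what you really use is that a fill edge of a minimal triangulation lies inside a minimal separator of $H$, which is a minimal separator of $G$; but no minimal separator of $G$ can contain vertices from two distinct components of $G-\pmc$, since $\pmc$ already separates them and a minimal separator has two full sides). The alternative you mention, via the unique-chord characterisation of fill edges, also goes through. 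Once step~(i) is in hand, your clique-tree argument for $S\subsetneq\pmc$ and the construction of the second full component $C^\ast$ are clean and correct.

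That said, there is a shorter route that avoids $H$ entirely and stays in $G$, using Theorem~\ref{thm:pmc}. By condition~(1), $S:=N(D)\subsetneq\pmc$; pick $v\in\pmc\setminus S$ and let $C$ be the component of $G-S$ containing $v$, so $C\neq D$. For each $s\in S$, condition~(2) gives either $sv\in E(G)$, whence $s\in N(C)$, or a component $D'\in\cc(G-\pmc)$ with $s,v\in N(D')$; then $D'\neq D$ (since $v\notin N(D)$), $D'\cap S=\emptyset$, and $v$ has a neighbour in $D'$, so $D'\subseteq C$ and again $s\in N(C)$. Thus $C$ is a second full component for $S$, and Lemma~\ref{twofull} finishes. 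This bypasses step~(i) altogether; the price is that it relies on Theorem~\ref{thm:pmc}, which the paper states after the proposition, but since both are imported results this is not a logical issue.
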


Next we state an important characterization of  PMCs in graphs.
\begin{theorem}[\cite{BouchitteT01}]\label{thm:pmc}
A set $\pmc \subseteq V(G)$ is a PMC in $G$ if and only if the following two conditions hold:
\begin{enumerate}
\item for every $D \in \cc(G-\pmc)$ we have $N(D) \subsetneq \pmc$;
\item for every $x, y \in \pmc$ either $x=y$, $xy \in E(G)$, or there exist $D \in \cc(G-\pmc)$ with $x,y \in N(D)$.
\end{enumerate}
\end{theorem}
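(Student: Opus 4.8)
The plan is to prove the two implications separately, relying on a handful of classical facts about chordal graphs: every chordal graph has a \emph{clique tree} (a tree on its inclusion-wise maximal cliques in which, for every vertex, the bags containing it form a subtree), the label $B\cap B'$ of every clique-tree edge is a minimal separator, a maximal clique of a chordal graph is never a minimal separator of it, clique-sums of chordal graphs glued along cliques are chordal, and, for minimal fill-ins, the deletion of any single fill edge from $G+F$ destroys chordality (this is exactly what inclusion-wise minimality of $F$ says). I will also freely use Proposition~\ref{prop:pmcnbrs}, Lemma~\ref{twofull}, and the Bouchitt\'e--Todinca Proposition asserting that a minimal fill-in creates no new minimal separators (so $\cc(G+F-X)=\cc(G-X)$ whenever $X$ is a minimal separator of $G+F$).

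For the forward direction, let $F$ be a minimal fill-in with $\pmc$ a maximal clique of the chordal graph $H=G+F$. For Condition~1, note that $N(D)\subseteq \pmc$ trivially for $D\in\cc(G-\pmc)$; if $N(D)=\pmc$, then the component $D^\star$ of $H-\pmc$ containing $D$ satisfies $N_H(D^\star)=\pmc$, i.e.\ $\pmc$ is the label of a clique-tree edge incident to the node $\pmc$ and hence properly contained in an adjacent maximal clique --- contradicting maximality of $\pmc$. For Condition~2, take $x,y\in\pmc$ with $xy\notin E(G)$; then $xy\in F$, so minimality of $F$ yields a cycle $C$ in $H$ of length at least four whose unique chord is $xy$. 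Since $\pmc$ is a clique of $H$, any vertex of $C$ in $\pmc\setminus\{x,y\}$ would contribute a second chord unless it is the single internal vertex of an arc $x\!-\!v\!-\!y$; hence at least one arc $P=x\,v_1\cdots v_s\,y$ of $C$ has all internal vertices outside $\pmc$. Using the clique tree of $H$, all the $v_i$ lie in one component $D^\star$ of $H-\pmc$ with $x,y\in N_H(D^\star)$, and I then want to descend to a component $D$ of $G-\pmc$ with $x,y\in N_G(D)$.

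The main obstacle, in both directions, is that the edges of these certifying cycles need not be edges of $G$: they may be fill edges of $F$. In the forward direction I would handle this by invoking Proposition~\ref{prop:pmcnbrs} inside $H$ (since $\pmc$ is a maximal clique of the chordal graph $H$, it is in particular a PMC of $H$) to conclude that $S:=N_H(D^\star)$ is a minimal separator of $H$, and then the ``no new minimal separators'' Proposition gives that $S$ is a minimal separator of $G$ with $\cc(G-S)=\cc(H-S)$; tracing $D^\star$ and the arc $P$ through this identification, together with Prop~\ref{prop:pmcnbrs} applied in $G$, should pin down the required component $D$ of $G-\pmc$. I expect this descent from $H$-structure to $G$-structure to be the most delicate step and the place where one must be careful.

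For the converse, assume Conditions~1 and~2. The plan is to build a fill-in explicitly as a clique-sum: put all of $\binom{\pmc}{2}$ into $F$, and for each $D\in\cc(G-\pmc)$ add a minimal fill-in $F_D$ of the graph $G_D:=G[N[D]]$ with $N(D)$ completed to a clique. Since $N(D)\subsetneq\pmc$ by Condition~1 and distinct components of $G-\pmc$ are non-adjacent, $G+F$ is a star-shaped clique-sum of the chordal graphs $G[\pmc]^{\mathrm{complete}}$ and the $G_D+F_D$, glued along cliques contained in $\pmc$, hence chordal; and $\pmc$ is maximal because any $v\notin\pmc$ lies in some $D$ with $N_{G+F}[v]\subseteq N[D]$, which cannot contain $\pmc$ by Condition~1. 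The work is in minimality of $F$. For a fill edge with an endpoint in some $D$, the hole certifying its necessity already sits inside $G_D+F_D$, which equals $(G+F)[N[D]]$, so it survives in $G+F$. For a fill edge $xy$ with $x,y\in\pmc$, Condition~2 gives a component $D$ with $x,y\in N(D)$; inside $(G_D+F_D)-xy$ take a shortest $x$--$y$ path through $D$ --- if it has length at least three it together with $xy$ is a hole of $(G+F)-xy$, and if it has length two, with middle vertex $d$ complete to $\{x,y\}$, then picking $z\in\pmc\setminus N(D)$ (nonempty by Condition~1) gives the chordless $4$-cycle $x\!-\!d\!-\!y\!-\!z\!-\!x$ in $(G+F)-xy$, since $z$ is non-adjacent to $d$. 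This last hole-construction, and the verification that it is genuinely induced in $G+F$, is where the argument is most technical.
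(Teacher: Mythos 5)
The paper does not prove Theorem~\ref{thm:pmc} at all --- it is imported as a black box from Bouchitt\'e and Todinca --- so there is no in-paper argument to compare yours against; I can only assess your proposal on its own terms, and it has one genuine gap, precisely at the step you yourself flag as ``delicate''. In the forward direction for Condition~2, after extracting from the hole $C$ of $(G+F)-xy$ an arc $P = x\,v_1\cdots v_s\,y$ with all internal vertices outside $\pmc$, you obtain a component $D^\star$ of $(G+F)-\pmc$ with $x,y\in N_{G+F}(D^\star)$. But every edge of $P$ may be a fill edge, so nothing yet places $x$ and $y$ in the \emph{$G$-}neighborhood of any component of $G-\pmc$. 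The tools you invoke do not close this. Proposition~\ref{prop:pmcnbrs} applied to $G+F$ gives that $S:=N_{G+F}(D^\star)$ is a minimal separator of $G+F$ with $D^\star$ a full component, and the ``no new minimal separators'' proposition then gives that $S$ is a minimal separator of $G$ with $\cc(G-S)=\cc(G+F-S)$; hence $G[D^\star]$ is connected and $D^\star$ sits inside a single component $D$ of $G-\pmc$. But this only yields $N_G(D^\star)\subseteq S$, not $N_G(D^\star)=S$: a priori $x$ could lie in $S\setminus N_G(D)$, with all of its attachments to $D^\star$ being fill edges. What you actually need is the stronger form of the Bouchitt\'e--Todinca lemma --- a minimal fill-in preserves not just the vertex sets of the components of $G-S$ but also their neighborhoods (equivalently, full components stay full), i.e.\ $N_{G+F}(C)=N_G(C)$ for every $C\in\cc(G-S)$. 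That statement is true and is what the original proof rests on, but it is not contained in the propositions you cite, and you neither state nor prove it. As written, the forward half of Condition~2 is a plan with its hardest step missing.

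Your converse direction (Conditions~1 and~2 imply that $\pmc$ is a PMC) is essentially complete and correct: the star-shaped clique-sum construction, the maximality check via $N_{G+F}[v]\subseteq N[D]\subsetneq D\cup\pmc$, and the use of the Rose--Tarjan--Lueker criterion that single-edge irredundancy implies minimality of the fill-in are all sound. One slip: when the shortest $x$--$y$ path $P$ through $D$ has length at least three, you cannot form a hole ``together with $xy$'', since $xy$ is exactly the edge deleted from $G+F$. The fix is the one you already use in the length-two case: close $P$ through a vertex $z\in\pmc\setminus N(D)$ (nonempty by Condition~1); the cycle $x$--$P$--$y$--$z$--$x$ has length at least four and is chordless because $P$ is induced in $(G+F)[N[D]]-xy=(G_D+F_D)-xy$, $z$ has no $G+F$-neighbor in $D$, and $xy$ is absent. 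So the case distinction is unnecessary, and with that repair the converse stands.
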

In the second condition of Theorem~\ref{thm:pmc}, we say that a component $D$ \emph{covers} the nonedge $xy$.

Our main algorithmic engine is the following.

\begin{theorem}[\cite{BouchitteT01}]\label{thm:engine}
Given a graph $G$ with vertex weights and a family $\pmcfam$ that contains all
PMCs of $G$,  one can solve \textsc{MWIS} in $G$
in time polynomial in the size of $G$ and $\pmcfam$.
\end{theorem}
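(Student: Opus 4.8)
The plan is to realize the Bouchitt\'{e}--Todinca dynamic programming over potential maximal cliques. Write $\alpha_{\weight}(G)$ for the maximum weight of an independent set in $G$. Two facts drive everything: (i) \textsc{MWIS} is solvable in polynomial time on chordal graphs (e.g.\ via a clique tree), and (ii) $\alpha_{\weight}(G)$ equals the maximum of $\alpha_{\weight}(G+F)$ over \emph{minimal} fill-ins $F$ of $G$. Fact (ii) holds because completing $V(G)\setminus I^\star$ into a clique (for an optimum independent set $I^\star$) yields a chordal split graph, so some inclusion-minimal $F$ inside that edge set is a minimal fill-in with $I^\star$ still independent in $G+F$; since $G\subseteq G+F$ the value cannot increase, so it stays $\weight(I^\star)$. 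Crucially, by the definition of a PMC the maximal cliques of $G+F$ for a minimal $F$ are PMCs of $G$, hence lie in $\pmcfam$. So it suffices to compute $\max_F \alpha_{\weight}(G+F)$ without enumerating fill-ins, by a bottom-up DP.

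First I would preprocess: Theorem~\ref{thm:pmc} lets one test in polynomial time whether a given set is a PMC, so discard non-PMC members of $\pmcfam$ and assume $\pmcfam$ is exactly the set of PMCs of $G$. By Proposition~\ref{prop:pmcnbrs}, $\Delta:=\{N(D): \pmc\in\pmcfam,\ D\in\cc(G-\pmc)\}$ is a polynomial-size family of minimal separators, and conversely every minimal separator of $G$ with a given full component admits a PMC sandwiched between it and that component (standard, cf.~\cite{BouchitteT01}), so $\Delta$ contains \emph{all} minimal separators of $G$. A \emph{block} is a pair $(S,C)$ with $S\in\Delta$ and $C$ a full component of $G-S$; let $G_{S,C}$ be $G[S\cup C]$ with $S$ turned into a clique, and process blocks by increasing $|C|$. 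The DP table is indexed by a block $(S,C)$ and a vertex $v\in S$ or the symbol $\bot$, with $\mathrm{val}(S,C,v)$ intended to be the maximum of $\weight(J)$ over independent sets $J\subseteq C$ of $G$ for which $J\cup(\{v\}\setminus\{\bot\})$ is independent in some minimal triangulation of $G_{S,C}$. The recursion chooses a PMC $\pmc\in\pmcfam$ with $S\subseteq\pmc\subseteq S\cup C$ (the bag nearest $S$), chooses at most one vertex $w$ of $\pmc$ to represent the independent set on $\pmc$ (forced to be $v$ when $v\neq\bot$, as $v\in S\subseteq\pmc$ and $\pmc$ becomes a clique), recurses into the blocks $(N(C_i),C_i)$ for the components $C_i\in\cc(G-\pmc)$ contained in $C$ with parameter $w_i:=w$ if $w\in N(C_i)$ and $\bot$ otherwise, and adds $\weight(w)$ exactly when $w\in\pmc\setminus S$ so the separator vertex is never double counted. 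Finally $\alpha_{\weight}(G)$ is obtained by one analogous ``root'' step ranging over all $\pmc\in\pmcfam$ and all $w\in\pmc\cup\{\bot\}$, recursing into $(N(C_i),C_i)$ for all $C_i\in\cc(G-\pmc)$.

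Correctness splits into soundness and completeness. For soundness, unrolling any DP computation produces, by the standard gluing of PMCs along minimal separators, a minimal triangulation $H$ of $G$ together with the set $I$ of chosen representatives; the consistency constraints (one vertex per bag, agreement along shared separators) combined with the running-intersection property of clique trees force the subtrees $\{\pmc: x\in\pmc\}$ over $x\in I$ to be pairwise disjoint, i.e.\ $I$ is independent in $H$ (hence in $G$) with weight equal to the computed value, so the DP never exceeds $\alpha_{\weight}(G)$. For completeness, take the minimal fill-in $F$ and the triangulation $H=G+F$ from Fact (ii); root a clique tree of $H$, whose bags are PMCs, and invoke the structure theory of minimal triangulations to see that the subtree below any bag $\pmc$ induces, on each $C\in\cc(G-\pmc)$, a minimal triangulation of $G_{N(C),C}$. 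Reading off, at each bag, the unique vertex of $I^\star$ in it (if any) as the representative yields a legal DP run of value $\weight(I^\star)$. Hence the DP outputs $\alpha_{\weight}(G)$ exactly, and backtracking recovers an optimum set. For the running time: after preprocessing there are $\Oh(|\pmcfam|\cdot|V(G)|^2)$ blocks and $|V(G)|+1$ values of $v$, and each transition enumerates a PMC, a representative vertex, and $\Oh(|V(G)|)$ child components, so the whole algorithm runs in time polynomial in $|V(G)|$ and $|\pmcfam|$.

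The main obstacle is exactly the structural fact invoked in the completeness step and in the gluing step: that a minimal triangulation of $G$ decomposes into minimal triangulations of the pieces $G_{N(C),C}$ glued along PMCs, matching the clique-tree structure of $H$ block by block. This is the technical heart of~\cite{BouchitteT01}, and the accompanying bookkeeping --- tracking which vertex of the independent set lies on each separator and ensuring its weight is counted once --- is where the argument must be handled with care.
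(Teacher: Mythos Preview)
The paper does not supply its own proof of this statement; it is invoked as a black box from~\cite{BouchitteT01}, so there is nothing here to compare your argument against. Your sketch is the standard Bouchitt\'{e}--Todinca dynamic programming over blocks $(S,C)$ and PMCs sandwiched in them, specialized to \textsc{MWIS} via the observation that $\alpha_{\weight}(G)=\max_{F}\alpha_{\weight}(G+F)$ over minimal fill-ins $F$; note that this specialization to \textsc{MWIS} was made explicit in later works such as~\cite{LokshtanovVV14}, rather than in~\cite{BouchitteT01} itself, which treated treewidth and minimum fill-in. The outline is correct, and you have rightly flagged as the main dependency the one structural lemma doing the real work: that minimal triangulations of $G$ decompose into, and can be reassembled from, minimal triangulations of the blocks $G_{S,C}$ glued along PMCs. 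Given that lemma, your soundness and completeness arguments and the polynomial running-time bound go through as you describe. One tiny clarification: when $v=\bot$ your transition should restrict the representative $w$ to $(\pmc\setminus S)\cup\{\bot\}$, since $v=\bot$ encodes that no vertex of $S$ lies in the independent set; your weight-counting rule already implicitly assumes this.
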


Thus it is enough to construct a family as in Theorem~\ref{thm:engine}. However,
it turns out that instead of constructing a family of PMCs, it is easier to
construct a family of components that result from deleting PMCs. This approach
was taken in~\cite{BouchitteT01,LokshtanovVV14,GrzesikKPP19}, and is justified by the following result:

\begin{theorem}[\cite{BouchitteT01}]\label{thm:pmc-lift-all}
  Given a graph $G$ and a family $\mathcal{G}$ of vertex sets of connected
  induced subgraphs of $G$
such that for every potential maximal clique $\pmc$ of $G$ we have $\cc(G-\pmc) \subseteq \mathcal{G}$,
one can compute the family $\pmcfam$ of all potential maximal cliques of $G$.
The running time of the algorithm and the size of the family $\pmcfam$ is bounded polynomially in the size of
$G$ and $\mathcal{G}$.
\end{theorem}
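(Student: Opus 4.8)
The plan is to generate, in time polynomial in $|G|$ and $|\mathcal{G}|$, a polynomial-size family of candidate vertex sets that provably contains every PMC of $G$, and then discard the candidates that are not PMCs. The discarding step is immediate from Theorem~\ref{thm:pmc}: for a fixed $\pmc \subseteq V(G)$ one computes $\cc(G-\pmc)$ and the neighborhoods $N(D)$ for $D \in \cc(G-\pmc)$, and checks conditions~(1) and~(2) directly, all in polynomial time. To feed the generation step, first extract from $\mathcal{G}$ the family of candidate minimal separators $\mathcal{S} := \{N(D) \colon D \in \mathcal{G}\}$; this is polynomial-time computable, has size at most $|\mathcal{G}|$, and by Proposition~\ref{prop:pmcnbrs} it contains every minimal separator that ``surrounds'' a PMC: for any PMC $\pmc$ and $D \in \cc(G-\pmc)$ we have $D \in \mathcal{G}$ and $N(D)$ is a minimal separator with full component $D$.

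The core of the argument --- essentially the listing algorithm of Bouchitt\'{e} and Todinca~\cite{BouchitteT01} --- is that every PMC can be reconstructed from a bounded amount of data drawn from $\mathcal{S}$ and $V(G)$. Two combinatorial observations underlie this. First, if $\pmc$ is a PMC and $S \subseteq \pmc$ with $\pmc \setminus S \neq \emptyset$, then $\pmc \setminus S$ lies inside a single connected component of $G - S$: indeed, for $u,v \in \pmc\setminus S$ with $uv \notin E(G)$, condition~(2) of Theorem~\ref{thm:pmc} gives $D \in \cc(G-\pmc)$ with $u,v \in N(D)$, and since $D$ is disjoint from $S \subseteq \pmc$ the connected set $D$ lies in one component of $G-S$, whence so do $u$ and $v$; for $uv \in E(G)$ this is trivial, and transitivity of ``lying in the same component of $G-S$'' finishes the claim. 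Second, for $x \in \pmc$ the set $C_x := \{x\} \cup \bigcup\{D \in \cc(G-\pmc) \colon x \in N(D)\}$ is connected and, again by condition~(2), $N(C_x) = \pmc \setminus \{x\}$; hence $\pmc \setminus \{x\}$ is always a separator with full component $C_x$, and whenever it is moreover a \emph{minimal} separator some $D \in \cc(G-\pmc)$ witnesses this with $N(D) = \pmc\setminus\{x\}$, so $\pmc\setminus\{x\} \in \mathcal{S}$ and $\pmc$ is recovered as $(\pmc\setminus\{x\}) \cup \{x\}$. For a general PMC $\pmc$ one fixes any $D_0 \in \cc(G-\pmc)$, sets $S := N(D_0) \in \mathcal{S}$, and lets $C$ be the unique component of $G-S$ containing $\pmc \setminus S$; then $\pmc = S \cup (\pmc \cap C)$, every component of $G - \pmc$ is either a component of $G-S$ other than $C$ or a component of $G[C] - (\pmc\cap C)$ (hence all lie in $\mathcal{G}$), and $\pmc\cap C$ is recovered from $S$, $C$, and $\Oh(1)$ further vertices of $\pmc$ (or one further separator from $\mathcal{S}$) by the explicit closure procedure of~\cite{BouchitteT01}, which only ever inspects $G$, $\mathcal{S}$, and $\mathcal{G}$. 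Enumerating over all $S \in \mathcal{S}$, all components $C$ of $G-S$, all constant-size tuples of vertices, and all sets $S' \cup \{x\}$ with $S' \in \mathcal{S}$ and $x \in V(G)$, produces a family of $|\mathcal{S}|\cdot|V(G)|^{\Oh(1)}$ candidates covering every PMC; discarding the non-PMCs yields $\pmcfam$.

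The main obstacle is precisely this reconstruction step: proving that a PMC is determined by $(S,C)$ together with a constant number of extra vertices, and that the closure procedure correctly recovers $\pmc\cap C$ without knowing $\pmc$ in advance --- in particular that every component of $G-\pmc$ it must peel off corresponds to a separator already present in $\mathcal{S}$ (equivalently, a component already present in $\mathcal{G}$). This is the technical heart of~\cite{BouchitteT01}; the remaining steps --- extracting $\mathcal{S}$ from $\mathcal{G}$, enumerating the bounded tuples, and verifying candidates via Theorem~\ref{thm:pmc} --- are routine and clearly run in time polynomial in $|G|$ and $|\mathcal{G}|$.
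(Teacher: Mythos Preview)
The paper does not supply its own proof of this statement: it is quoted as a result of Bouchitt\'e and Todinca~\cite{BouchitteT01} and used as a black box, so there is no in-paper argument to compare against.

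Your sketch is a faithful outline of the Bouchitt\'e--Todinca listing procedure: extract the surrounding separators $\mathcal{S} = \{N(D) : D \in \mathcal{G}\}$, observe via Proposition~\ref{prop:pmcnbrs} that every separator needed to reconstruct a PMC already lies in $\mathcal{S}$, generate a polynomial-size family of candidates from members of $\mathcal{S}$ together with $\Oh(1)$ additional vertices (or one further separator), and filter using the characterization of Theorem~\ref{thm:pmc}. The two structural observations you isolate --- that $\pmc \setminus S$ lies in a single component of $G-S$, and that $N(C_x) = \pmc \setminus \{x\}$ so that any ``outside'' full component already lies in $\mathcal{G}$ --- are precisely the hooks the original argument uses. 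You also correctly identify, and explicitly defer to~\cite{BouchitteT01}, the one genuinely nontrivial step: the exact reconstruction rule showing that $(S,C)$ together with bounded further data determines $\pmc$, and that the closure procedure never needs a separator outside $\mathcal{S}$. So what you have written is an accurate road map rather than a self-contained proof; that is entirely in keeping with how the paper itself treats the result.
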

Our final observation is the following
\begin{theorem}\label{thm:sepsenough}
Given a graph $G$ and a family $\mathcal{S}$ of all minimal separators of $G$,
one can compute the family $\mathcal{G}$ of subsets of $V(G)$
such that for every potential maximal clique $\pmc$ of $G$ we have $\cc(G-\pmc) \subseteq \mathcal{G}$.
The running time of the algorithm and the size of the family $\mathcal{G}$ is bounded polynomially in the size of $G$ and $\mathcal{S}$.
\end{theorem}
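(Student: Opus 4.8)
The plan is to show that every component $D \in \cc(G-\pmc)$ of a PMC $\pmc$ can be reconstructed from the set $N(D)$ together with a small amount of additional ``marking'' data, and that $N(D)$ is always a minimal separator of $G$ (this last part is exactly Proposition~\ref{prop:pmcnbrs}). So fix a PMC $\pmc$ and a component $D \in \cc(G-\pmc)$. By Proposition~\ref{prop:pmcnbrs}, $S := N(D)$ is a minimal separator of $G$ and $D$ is a full component for $S$; in particular $S \in \mathcal{S}$. The issue is that a single minimal separator $S$ may have several full components, so $S$ alone does not determine $D$. To pin down $D$ among the components of $G - S$, I would note that $D$ is precisely the component of $G-S$ containing some vertex $v \in D$; so it suffices to additionally guess one vertex $v \in D$. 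Concretely, the algorithm enumerates, over all $S \in \mathcal{S}$ and all $v \in V(G) \setminus S$, the component of $G-S$ containing $v$ (if $v \notin S$), and puts all these sets into $\mathcal{G}$. This produces a family of size at most $|\mathcal{S}| \cdot |V(G)|$, computable in time polynomial in $|G|$ and $|\mathcal{S}|$ by a linear-time reachability computation for each pair $(S,v)$.

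It remains to verify correctness, i.e.\ that for every PMC $\pmc$ we have $\cc(G-\pmc) \subseteq \mathcal{G}$. Take $D \in \cc(G-\pmc)$ and set $S = N(D) \in \mathcal{S}$ as above. Pick any $v \in D$. Since $D$ is a connected subgraph of $G - S$ (as $N(D) = S$, the set $D$ is disjoint from $S$ and induces a connected graph) and $D$ is inclusion-maximal with this property — indeed $D$ is a full component for $S$, hence a whole component of $G-S$ — the component of $G - S$ containing $v$ is exactly $D$. Therefore $D$ is one of the sets enumerated by the algorithm, so $D \in \mathcal{G}$, as required. The bound on $|\mathcal{G}|$ and on the running time is immediate from the description of the enumeration.

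I expect the only subtlety — not really an obstacle — to be making sure that the set we add to $\mathcal{G}$ for the pair $(S,v)$ really coincides with $D$ rather than being a larger or smaller set: this is where we use that $D$ is a \emph{full} component for $S$ (Proposition~\ref{prop:pmcnbrs}), so that $D$ is genuinely a connected component of $G-S$ and not merely a connected subgraph of it. Everything else is bookkeeping: $\mathcal{S}$ is given, iterating over $S \in \mathcal{S}$ and $v \in V(G)$ costs a polynomial factor, and each component computation is a single graph search.
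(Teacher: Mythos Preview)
Your proposal is correct and follows essentially the same approach as the paper: both rely on Proposition~\ref{prop:pmcnbrs} to conclude that $N(D)$ is a minimal separator and that $D$ is a component of $G-N(D)$, and then set $\mathcal{G}=\bigcup_{S\in\mathcal{S}}\cc(G-S)$. Your extra marker vertex $v$ is harmless but redundant, since one can simply enumerate all components of $G-S$ directly for each $S\in\mathcal{S}$ (your loop over $v$ just lists each component once per vertex it contains).
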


  \begin{proof}
    For every $X \in \mathcal{S}$ we can compute in polynomial time the set
    $\cc(G-X)$. Let $\mathcal{G}=\bigcup_{X \in \mathcal{S}}\cc(G-X)$;
        we claim that $\mathcal{G}$ is the desired
    family.
    To see this, let $\pmc$ be a potential maximal clique of $G$ and let
    $D \in \cc(G-\pmc)$.
    By Proposition \ref{prop:pmcnbrs}, $N(D)$ is a minimal separator of
    $G$, and therefore  $D \in \cc(G-N(D)) \subseteq \mathcal{G}$.
  \end{proof}
We remark that all minimal separators in a graph can be enumerated in time polynomial in the graph size and the
number of output minimal separators~\cite{BouchitteT02}. In view of
Theorem~\ref{thm:sepsenough} from now on we focus on studying minimal separators.

\section{$k$-prism and minimal separators}\label{sec:kprism}
The goal
of this section is to prove Theorem~\ref{thm:kprism-alg}.
We say that a graph class $\mathcal{C}$ has the {\em polynomial separator property} if there exists
$b_{\mathcal{C}}$ such that every $G \in \mathcal{C}$ has at
most $|V(G)|^{b_{\mathcal{C}}}$ minimal separators.
In view of the results of Section~\ref{sec:prelims}, \textsc{MWIS} can be solved
in polynomial time in any graph class with the polynomial separator property.
It is easy to see that the  $k$-prism has $2^k-2$ minimal separators while being long-hole-free,
and therefore the class of long-hole-free graphs does not have the
polynomial separator property. In this section we prove that in long-hole-free
graphs $k$-prisms are the only reason the property is violated. 

We show:

\begin{theorem}\label{thm:kprism-minseps}
Let $k \geq 2$ be an integer and let $G$ be a long-hole-free graph that does not contain a $k$-prism.
Then $G$ has at most $|V(G)|^{k + 2}$ minimal separators.
\end{theorem}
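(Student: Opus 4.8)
The plan is to attach to every minimal separator $X$ of $G$ a bounded‑size encoding — a list of at most $k+2$ vertices of $G$ (together with the small amount of combinatorial data saying how they are used) — from which $X$ can be reconstructed knowing only $G$; counting encodings then gives the bound $|V(G)|^{k+2}$, the mild slack in the exponent absorbing the role‑labels and corner cases. First I would use Lemma~\ref{twofull}: a minimal separator $X$ has (at least) two distinct full components, so for each $X$ I fix an ordered pair $(A,B)$ of distinct full components of $G-X$, so that $N(A)=N(B)=X$ and in particular every vertex of $X$ has a neighbour in $A$ and a neighbour in $B$.

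The reconstruction engine is an elementary observation: since $X$ separates $A$ from $B$ (and all components of $G-X$ are pairwise non‑adjacent), any common neighbour of a vertex of $A$ and a vertex of $B$ lies in $X$. Consequently, if $S_A\subseteq A$ and $S_B\subseteq B$ are such that every vertex of $X$ has a neighbour in $S_A$ and a neighbour in $S_B$, then $X=N(S_A)\cap N(S_B)$; more generally, if a set $R\subseteq X$ absorbs the vertices not covered from one side or the other, then $X=\bigl(N(S_A)\cap N(S_B)\bigr)\cup R$, which is computable from $G$ and the triple $(S_A,S_B,R)$. So everything reduces to showing that $S_A$, $S_B$ and $R$ can be chosen with $|S_A|+|S_B|+|R|$ bounded in terms of $k$.

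The key structural step — and the main obstacle — is to bound the cost of covering $X$ from $A$ (and symmetrically from $B$) using the long‑hole‑free hypothesis together with $k$-prism‑freeness. The basic tool is a ``$C_4$ lemma'': if $u,v\in X$ are non‑adjacent, then a shortest $u$--$v$ path with interior in $A$ and a shortest $u$--$v$ path with interior in $B$ together form an \emph{induced} cycle (no chords, since the two interiors live in the non‑adjacent sets $A$ and $B$, each path is induced, and $u\not\sim v$); its length is at least $4$, and since $G$ has no long hole it is exactly $4$, so $u$ and $v$ have a common neighbour in $A$ and a common neighbour in $B$. From here I would run a greedy/extremal argument: build $S_A$ by repeatedly adding a vertex of $A$ that absorbs a ``new type'' of vertex of $X$; if this process does not stop after about $k$ steps, use the abundant induced $C_4$'s supplied by the $C_4$ lemma to upgrade the resulting loosely‑matched configuration into an honest induced $k$-prism — two cliques of size $k$, one straddling $A$ and $X$ and one straddling $B$ and $X$ (exactly as the two big cliques sit inside a genuine $k$-prism relative to one of its separators), joined by a perfect matching and no other edges — contradicting the hypothesis.

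I expect the genuinely delicate part to be precisely this promotion. A ``spread'' family of vertices of $X$ that is expensive to cover from $A$ immediately yields a matching‑like structure and hence a size lower bound, but turning \emph{both} sides into cliques of size \emph{exactly} $k$ with no stray edges forces a careful use of the $C_4$ structure and of the interaction between $A$, $B$ and $X$; this is where the precise constant in the exponent ($k+2$ rather than, say, $2k$) is pinned down. Granting the structural lemma, the theorem follows by counting: each minimal separator is determined by at most $k+2$ vertices split into the roles $S_A$, $S_B$, $R$, so there are at most $|V(G)|^{k+2}$ of them.
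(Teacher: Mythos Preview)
Your reconstruction formula $X=N(S_A)\cap N(S_B)$ is correct, but the proposal has a genuine gap precisely where you say it does: you never establish that a small $S_A$ exists, and the sketched route (greedy cover plus the $C_4$ lemma) does not do the job. The $C_4$ lemma only speaks about \emph{non-adjacent} pairs of $X$, whereas a minimal cover $S_A=\{a_1,\dots,a_\ell\}$ comes with private neighbours $x_1,\dots,x_\ell\in X$ that may perfectly well be pairwise adjacent; nothing in the greedy picture forces either the $a_i$ or the $x_i$ to be a clique, and without both cliques no prism materialises. The paper obtains the prism by a different mechanism (Lemma~\ref{lem:kprism}): one fixes a \emph{pivot} $v\in A$ such that no component of $G[A]-v$ neighbours all of $S$, shows that these components have \emph{nested} neighbourhoods in $S\setminus N(v)$, and then takes a minimal $Z'\subseteq A\cap N(v)$ covering $S\setminus N(v)$. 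Because every $z\in Z'$ is adjacent to $v$ and every private neighbour $f(z)\in S\setminus N(v)$ is reachable through $B$, long-hole-freeness forces both $Z'$ and $f(Z')$ to be cliques with only the matching between them --- an honest $|Z'|$-prism sitting with one clique in $A$ and the other in $X$ (not ``straddling'' as you guessed). The pivot $v$ is what makes the promotion go through; a bare greedy cover does not provide the needed two-step connections.

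It is also worth noting that once one has such a $Z\subseteq A$ with $|Z|\le k$ and $S\subseteq N(Z)$, your two-sided encoding is superfluous: the other full component $B$ is disjoint from $N[Z]$ and satisfies $N(B)=S\subseteq N[Z]$, hence $B$ is a connected component of $G-N[Z]$ and $S=N(B)$ is recovered from $Z$ together with a single vertex of $B$ --- that is $k{+}1$ vertices, no $S_B$ or $R$ needed. The paper, however, does \emph{not} argue by direct encoding at all. It proves the stronger inequality $\sum_{S}\max(0,\#\{\text{full components of }S\}-1)\le |V(G)|^{k+2}$ by induction on $|V(G)|$: delete an arbitrary vertex $v$, apply Lemma~\ref{lem:kprism} only to the ``special'' separators for which $v$ lies in a full component and is critical there (these contribute at most $|V(G)|^{k+1}$), and show that every non-special separator is accounted for by a separator of $G-v$ via a careful potential-function comparison.
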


We start with a lemma.

\begin{lemma}\label{lem:kprism}
Let $k \geq 2$ be an integer and let $G$ be a long-hole-free graph that does not contain a $k$-prism.
Let $S$ be a minimal separator in $G$ and let
$A, B \in \cc(G-S)$ with  $N(A) = N(B) = S$.
Assume that there exists $v \in A$ such that for every $A' \in \cc(G[A]-\{v\})$ we have $S \setminus N(A') \neq \emptyset$.
Then there exists a set $Z \subseteq A \cap N[v]$ such that
$v \in Z$, $|Z| \leq k$ and  $S \subseteq N(Z)$.
\end{lemma}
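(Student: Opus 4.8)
The plan is to reduce to two claims. Write $S_1 := S \setminus N[v]$. It suffices to produce a set $Z_1 \subseteq A \cap N(v)$ with $|Z_1| \le k-1$ and $S_1 \subseteq N(Z_1)$: then $Z := \{v\} \cup Z_1$ satisfies $Z \subseteq A \cap N[v]$, $v \in Z$, $|Z| \le k$, and $N(Z) \supseteq N(v) \cup N(Z_1) \supseteq (S \cap N(v)) \cup S_1 = S$ (if $S_1 = \emptyset$ take $Z = \{v\}$, using $k \ge 2$). So the first step is to prove that every $s \in S_1$ has a neighbor in $A \cap N(v)$ — so that such a cover $Z_1$ exists at all — and the second step is to prove that $k-1$ such vertices always suffice.

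I expect the first step to be the main obstacle. Assume for contradiction that some $s \in S$ with $s \notin N[v]$ has no neighbor in $A \cap N(v)$. Since $G[A]$ is connected and $s \in N(A) = S$, vertex $s$ has a neighbor in $A$, and every such neighbor lies outside $N[v]$. First I would show that all neighbors of $s$ in $A$ lie in a single component of $G[A] - \{v\}$: note that $v$ has a neighbor in \emph{every} component of $G[A] - \{v\}$ (as $G[A]$ is connected), so if $s$ had neighbors in two distinct components $A_1, A_2$ there would be two internally disjoint induced $s$–$v$ paths, with interiors in $A_1$ and in $A_2$ respectively, each of length at least $3$ (a length-$2$ path would give a common neighbor of $s$ and $v$ in $A$, a contradiction), and their union would be an induced cycle of length at least $6$ — a long hole. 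Let then $A_1$ be the unique component of $G[A] - \{v\}$ meeting $N(s)$; by hypothesis pick $s^\ast \in S \setminus N(A_1)$, and note $s^\ast \neq s$. Now concatenate three pieces: a shortest $s$–$v$ path $Q$ with interior in $A_1$ (of length at least $3$); a route from $s^\ast$ to $v$ avoiding $A_1$ — the edge $s^\ast v$ if $s^\ast \in N(v)$, and otherwise a shortest $s^\ast$–$v$ path with interior in a component $A_2 \neq A_1$ of $G[A] - \{v\}$ containing a neighbor of $s^\ast$; and a shortest $s$–$s^\ast$ path with interior in $B$ (possible since $N(B) = S$), or the edge $s s^\ast$ if present. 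Their concatenation is a cycle, and to see that it is induced one checks only: $A_1, A_2$ are distinct components of $G[A] - \{v\}$; $A$ and $B$ are distinct components of $G - S$; $s^\ast \notin N(A_1)$; $s$ has no neighbor in $A_2$ (this is exactly where the single-component fact is used); $s \notin N[v]$; and the minimality of each of the three paths. The resulting induced cycle has length at least $5$, the desired contradiction.

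For the second step, let $T = \{a_1, \dots, a_t\} \subseteq A \cap N(v)$ be an inclusion-minimal set with $S_1 \subseteq N(T)$, and suppose $t \ge k$. By minimality each $a_i$ has a \emph{private} neighbor $s_i \in S_1$, adjacent to $a_i$ but to no $a_j$ with $j \neq i$; the $a_i$ are distinct, the $s_i$ are pairwise distinct (by privateness), $a_i \in A$, $s_i \in S$, and each $s_i$ has a neighbor in $B$ since $N(B) = S$. Keeping $a_1, \dots, a_k$ and $s_1, \dots, s_k$, I would show that both $\{a_1, \dots, a_k\}$ and $\{s_1, \dots, s_k\}$ are cliques: a non-edge $a_i a_j$ yields the induced path $s_i - a_i - v - a_j - s_j$ (using $s_i, s_j \notin N[v]$ and privateness), and, once the $a_i$ are known to form a clique, a non-edge $s_i s_j$ yields the induced path $s_i - a_i - a_j - s_j$; closing up through a shortest $s_i$–$s_j$ path with interior in $B$ (or, in the first case, directly if $s_i s_j \in E(G)$) produces an induced cycle of length at least $5$. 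Since $a_i$ is adjacent to $s_i$ but to no $s_j$ with $j \neq i$, the set $\{a_1, \dots, a_k\} \cup \{s_1, \dots, s_k\}$ then induces exactly a $k$-prism, contradicting the hypothesis. Hence $t \le k-1$, and $Z := \{v\} \cup T$ is as required.
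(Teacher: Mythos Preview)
Your proof is correct. The second step --- taking a minimal cover $T \subseteq A \cap N(v)$ of $S_1$, extracting private neighbours $s_i$, showing both $\{a_i\}$ and $\{s_i\}$ are cliques via long-hole arguments, and reading off a $k$-prism --- is essentially identical to what the paper does.

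The first step is organised differently. The paper first proves a \emph{chain} property: the neighbourhoods $N(A_i) \cap S'$ of the components of $G[A]-\{v\}$ are linearly ordered by inclusion, so $S' \subseteq N(A_1)$ for the top component $A_1$; it then picks a single anchor $p \in S \setminus N(A_1)$ (necessarily in $N(v)$), and for each $x \in S'$ closes a hole through $A_1$, the edge $pv$, and $B$ to force a neighbour of $x$ in $N(v) \cap A_1$. You bypass the chain lemma entirely: for a hypothetical bad $s$ you show directly that its $A$-neighbours lie in a single component $A_1^{(s)}$ (else two long induced $s$--$v$ paths give a long hole), and then use the hypothesis to produce $s^\ast \in S \setminus N(A_1^{(s)})$, assembling the hole with a case split on whether $s^\ast \in N(v)$. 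The trade-off is that the paper's chain statement is a slightly stronger structural fact (and lets the cover live in a fixed component $A_1$), while your argument is more local and needs one extra case; for the purposes of this lemma both routes are of comparable length and difficulty.
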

\begin{proof}
Let $S' = S \setminus N(v)$. We claim that we can enumerate the components of
$\cc(G[A]-\{v\})$ as $A_1, A_2, \ldots, A_m$ such that 
	$$N(A_1) \cap S' \supseteq N(A_2) \cap S' \supseteq \ldots \supseteq N(A_m) \cap S'.$$

Let $A^1$ and $A^2$ be two distinct connected components of $G[A]-\{v\}$; see Figure~\ref{fig:L32-1} for an illustration.
Assume that for $i=1,2$ there exist $v^i \in S' \cap N(A^i) \setminus N(A^{3-i})$, that is,
the neighborhoods of the components $A^1$ and $A^2$ are incomparable inside $S'$.
Let $P^i$ be a shortest path from $v^i$ to $v$ via $A^i$, and let $Q$ be a shortest path from $v^1$ to $v^2$ via $B$ (possibly $Q$ consists of one edge if $v^1$ and $v^2$ are adjacent).
Note that $P^1$ and $P^2$ are of length at least two as $v^1, v^2 \in S'$ (and
therefore $v^1,v^2$ are non-adjacent to $v$), while $Q$ is of length at least
one. Hence, the concatenation of $P^1$, $P^2$, and $Q$ is a hole of length at least five, a contradiction.  This proves the claim.

By the assumption of the Lemma, we know that
$S \setminus N(A_1) \neq \emptyset$; let $p \in S \setminus N(A_1)$.
Since $S=N(A) = N(A_1) \cup N(v)$, we deduce that $S' \subseteq N(A_1)$ and $pv \in E(G)$.

Assume there exists $x \in S'$ such that a shortest path $P$ from $x$ to $v$ via $A_1$ is of length at least three; see Figure~\ref{fig:L32-2} for an illustration.
Let $Q$ be a shortest path from $x$ to $p$ via $B$.
Then, the concatenation of $P$, $Q$, and the edge $pv$ is a hole of length at least five, a contradiction. We deduce that for every $x \in S'$, there exists a vertex $y \in N(v) \cap A_1$
with $xy \in E(G)$.
In particular, $S' \subseteq N(N(v) \cap A_1)$.

Consider now an inclusion-wise minimal set $Z' \subseteq A_1 \cap N(v)$ with $S' \subseteq N(Z')$.
By minimality, for every $z \in Z'$ pick $f(z) \in S' \setminus N(Z' \setminus \{z\})$; clearly, $zf(z) \in E(G)$.
Pick two distinct $z^1, z^2 \in Z'$; see Figure~\ref{fig:L32-3} for an illustration. Let $Q$ be a shortest path from $f(z^1)$ to $f(z^2)$ via $B$.
Let $P$ equal the edge $z^1z^2$ if it is present, or the concatenation of edges $vz^1$ and $vz^2$ if $z^1z^2 \notin E(G)$.
Then, the concatenation of $P$, $Q$, and edges $z^if(z^i)$ for $i=1,2$ is a hole of length at least five unless
both $z^1z^2 \in E(G)$ and $f(z^1)f(z^2) \in E(G)$.

Write  $f(Z') = \{f(z)~|~z \in Z'\}$ and  note that $|f(Z')|>|Z'|$.
We conclude  that both $Z'$ and $f(Z')$ are cliques of $G$.
Since $G$ does not contain a $k$-prism, it follows that $|Z'| < k$,
and therefore the set $Z = Z' \cup \{v\}$ has the desired properties.
This completes the proof.
\end{proof}

\begin{figure}[tb]
\begin{center}
\includegraphics{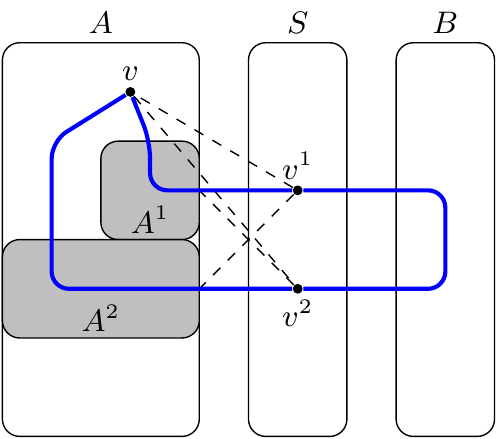}
\caption{First step in the proof of Lemma~\ref{lem:kprism}. If two components $A^1$ and $A^2$ have incomparable neighborhoods in $S'$, then
  a blue long hole appears.}\label{fig:L32-1}
\end{center}
\end{figure}

\begin{figure}[tb]
\begin{center}
\includegraphics{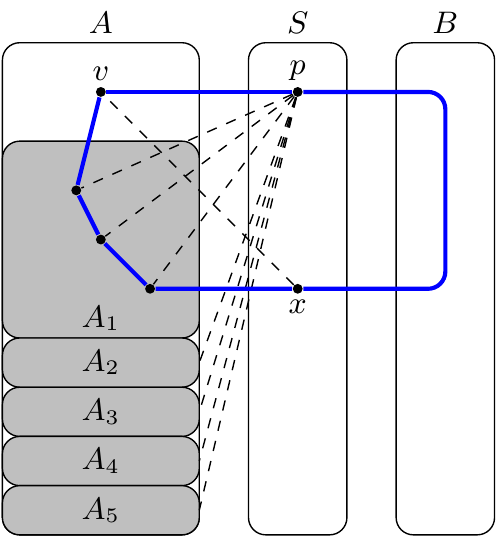}
\caption{Second step in the proof of Lemma~\ref{lem:kprism}. If a path from $v$ to $x \in S'$ through $A_1$ is too long, a blue long hole appears.}\label{fig:L32-2}
\end{center}
\end{figure}

\begin{figure}[tb]
\begin{center}
\includegraphics{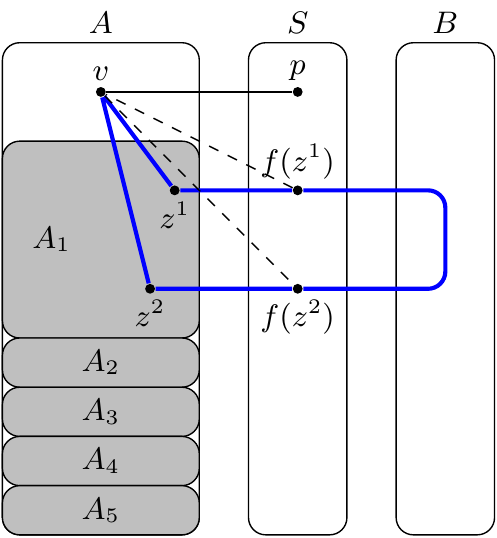}
\caption{Third step in the proof of Lemma~\ref{lem:kprism}. Unless both $z^1z^2 \in E(G)$ and $f(z^1)f(z^2) \in E(G)$, the blue cycle can be shortened to a long hole.}\label{fig:L32-3}
\end{center}
\end{figure}

Now we can prove Theorem~\ref{thm:kprism-minseps}.
\begin{proof}
For a set $S \subseteq V(G)$, we define $\poten_G(S) = \max(0, |\{A \in \cc(G-S)~|~N_G(A)=S\}| - 1)$.
Note that $\poten_G(S) > 0$ if and only if $S$ is a minimal separator. 
By induction on the number of vertices of $G$, we show that 
\begin{equation}\label{eq:kprism1}
\sum_{S \subseteq V(G)} \poten_G(S) \leq |V(G)|^{k + 2}.
\end{equation}
The statement is straightforward for $|V(G)| \leq 2$.

Pick an arbitrary $v \in V(G)$ and let $G' = G-\{v\}$.
Since by the inductive hypothesis \eqref{eq:kprism1} holds for $G'$, in order to
 show~\eqref{eq:kprism1}, it suffices to show that
\begin{equation}\label{eq:kprism2}
\sum_{S \subseteq V(G)} \poten_G(S) - \sum_{S \subseteq V(G')} \poten_{G'}(S) \leq |V(G)|^{k+1}.
\end{equation}
Let $S \subseteq V(G)$ be such that $\poten_G(S) > 0$, that is, $S$ is a minimal separator in $G$. By Lemma~\ref{twofull} at least
two members of $\cc(G-X)$ are full components.
We consider two cases.

We say that $S$ is \emph{special} if 
\begin{itemize}
\item $v \notin S$,
\item  if $A$ is the connected component of $G-S$ that contains $v$,
then $N_G(A) = S$, and
\item for every $A' \in \cc(G[A]-\{v\})$ we have $S \setminus N_G(A') \neq \emptyset$.
\end{itemize}
If $S$ is special, then by Lemma~\ref{lem:kprism} there exists $Z \subseteq A$ of size at most $k$ such that 
$S \subseteq N_G(Z)$. Thus, every connected component $B \in \cc(G-S)$ distinct from $A$
with $N_G(B) = S$ is a connected component of $G-N[Z]$. Since there are at most $|V(G)|^k$ choices
for $Z$, we infer that the contribution to the sum $\sum_{S \in V(G)} \poten_G(S)$
from the sets $S$ that are special is at most $|V(G)|^{k+1}$.

 It is now enough to obtain an upper bound on the contribution of the
non-speical separators of $G$ to the left hand side of
\eqref{eq:kprism2}.
Define $\poten'_G(S) = 0$ if $S$ is special and $\poten'_G(S) = \poten_G(S)$ otherwise. It suffices to show that:

\begin{equation}\label{eq:kprism22}
\sum_{S \subseteq V(G)} \poten'_G(S) \leq \sum_{S \subseteq V(G')} \poten_{G'}(S).
\end{equation}

If $S$ is a minimal separator that is not special, then either $v \in S$ or $v \notin S$ and,
   if $A$ is the connected component of $G-S$ that contains $v$,
then either $N_G(A) \subsetneq S$ or there is one connected component $A'$ of $G[A]-\{v\}$ that satisfies $N_{G'}(A') = S$.
In both cases $S' := S \setminus \{v\}$ is a minimal separator in $G'$.
Hence, to show~\eqref{eq:kprism2} it suffices to show that for every minimal separator $S'$ in $G'$
it holds that:
\begin{equation}\label{eq:kprism3}
\poten_{G'}(S') \geq \poten'_{G}(S') + \poten_G(S' \cup \{v\}).
\end{equation}

Note that $\poten_G(S' \cup \{v\})=\poten'_G(S' \cup \{v\})$ since
a minimal separator containing $v$ is not special.

Let $\mathcal{A} := \{A \in \cc(G'-S')~|~N_{G'}(A) = S'\}$ and let $\mathcal{B} := \{A \in \mathcal{A}~|~v \in N_G(A)\}$.
Clearly,
	$$\poten_{G'}(S') = \max(0, |\mathcal{A}| - 1).$$
If $\mathcal{B} \neq \emptyset$, then there exists a single connected component of $G-S'$ that contains $v$
	and all connected components of $\mathcal{B}$. Hence, 
	\begin{align*}
		\poten_{G}(S' \cup \{v\}) &= |\mathcal{B}|-1,\\
		\poten'_G(S') \leq \poten_{G}(S') &= |\mathcal{A}|-|\mathcal{B}|=\poten_{G'}(S')-\poten_{G}(S' \cup \{v\}).
	\end{align*}
This proves~\eqref{eq:kprism3} in the case $\mathcal{B} \neq \emptyset$.
Otherwise, if $\mathcal{B} = \emptyset$, then $S' \cup \{v\}$ is not a minimal separator in $G$ and $\poten_{G}(S' \cup \{v\}) = 0$.
If the connected component $A$ of $G-S'$ that contains $v$ satisfies $N(A) = S'$, then $S'$ is special in $G$, $\poten'_G(S') = 0$, and~\eqref{eq:kprism3} is proven.
Otherwise, we observe that $\poten_G(S') = \poten_{G'}(S')$ and we are done.
This completes the proof.
\end{proof}

Finally, we prove  Theorem~\ref{thm:kprism-alg}.
\begin{proof}[Proof of Theorem~\ref{thm:kprism-alg}.]
Since $G$ is $k$-prism-free,  Theorem~\ref{thm:kprism-minseps}
implies that the number of minimal separators in $G$ is at most
$|V(G)|^{k+2}$. By a result of \cite{BouchitteT02}, all
minimal separators of $G$ can be enumerated in time $n^{\Oh(k)}$.
Now Theorem~\ref{thm:kprism-alg} follows from Theorems~\ref{thm:sepsenough},~\ref{thm:pmc-lift-all}, and~\ref{thm:engine}.
\end{proof}

\section{Dominating a PMC with three vertices}\label{sec:3dom}
In this section we prove Theorem~\ref{thm:seps}. To this end, we show that in a long-hole-free graph $G$
every PMC is contained in a neighborhood of at most three vertices of $G$.
This is done by a sequence of structural lemmas that follows next. 

\begin{lemma}\label{lem:lh:minsep-dom-indset}
Let $G$ be a long-hole-free graph, let $S$ be a minimal separator in $G$, and let $A \in \cc(G-S)$ satisfy $N(A) = S$.
Then for every independent set $M \subseteq S$ there exists a vertex $a \in A$ with $M \subseteq N(a)$.
\end{lemma}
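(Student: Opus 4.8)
The plan is to prove the lemma by induction on $|M|$. The base cases $|M|\le 1$ are immediate: if $M=\{m\}$ then $m\in S=N(A)$ has a neighbour in $A$, and if $M=\emptyset$ any vertex of $A$ works. For the inductive step fix a vertex $m\in M$, put $M'=M\setminus\{m\}$, and let $\compfam=\{a\in A : M'\subseteq N(a)\}$; by the inductive hypothesis $\compfam\neq\emptyset$. If some $a\in\compfam$ is adjacent to $m$ then $M\subseteq N(a)$ and we are done, so suppose for contradiction that $\compfam$ is disjoint from $N(m)$. The goal is to extract a hole of length at least five, contradicting the hypothesis on $G$.

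To close up such a hole I will use a second full component. Since $S$ is a minimal separator, Lemma~\ref{twofull} gives a component $B\in\cc(G-S)$ with $B\neq A$ and $N(B)=S\supseteq M$; in particular any two non-adjacent vertices of $M$ are joined by a shortest, hence chordless, path of length at least two whose interior lies in $B$. On the $A$ side, $m$ has a neighbour in $A$ and $\compfam\neq\emptyset$, so inside the connected graph $G[A]$ there is a shortest path $R=d_0d_1\cdots d_q$ from $\compfam$ to $N(m)\cap A$; minimality makes $R$ chordless, forces $d_q$ to be the only vertex of $R$ in $N(m)$, and gives $q\ge 1$ because $\compfam\cap N(m)=\emptyset$. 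Since $d_q\in N(m)\cap A$ lies outside $\compfam$, there is $m^{*}\in M'$ with $d_q\notin N(m^{*})$; note $d_0\in N(m^{*})$ as $d_0\in\compfam$, and $m^{*}\ne m$ with $m^{*}m\notin E(G)$ because $M$ is independent. Let $\beta$ be the largest index with $d_\beta\in N(m^{*})$; then $0\le\beta<q$, no $d_i$ with $i>\beta$ is adjacent to $m^{*}$, and no $d_i$ with $i<q$ is adjacent to $m$.

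Now consider the cycle $C$ obtained from the subpath $d_\beta d_{\beta+1}\cdots d_q$, the edges $m^{*}d_\beta$ and $d_qm$, and a shortest path $Q$ from $m$ to $m^{*}$ with interior in $B$. A short verification should show $C$ is a hole of length at least five: its vertex set consists of the at least two vertices $d_\beta,\dots,d_q$ in $A$, the two vertices $m,m^{*}$ in $S$, and at least one interior vertex of $Q$ in $B$; there are no edges between $A$ and $B$; $R$ and $Q$ are chordless; and, by the choices above, $m$ has no neighbour on $C$ other than $d_q$ and its neighbour on $Q$, while $m^{*}$ has no neighbour on $C$ other than $d_\beta$ and its neighbour on $Q$. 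This long hole contradicts the assumption, so some $a\in\compfam$ must be adjacent to $m$, and then $M\subseteq N(a)$, completing the induction.

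The one point that needs genuine care — and the step I expect to be the main obstacle — is controlling the adjacencies of the separator vertices $m$ and $m^{*}$ to the path lying in $A$: an extra such edge would be a chord of $C$ and could shrink it below length five. This is precisely why $R$ is chosen as a shortest path from $\compfam$ all the way to $N(m)\cap A$ (so that only its last vertex sees $m$), why $m^{*}$ is chosen so that the far endpoint $d_q$ does not see it, and why we then restart the path at the last vertex $d_\beta$ that does see $m^{*}$. The remaining checks are routine facts about components of $G-S$ and chordlessness of shortest paths.
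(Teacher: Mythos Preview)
Your proof is correct and follows essentially the same approach as the paper's: both find two non-adjacent vertices $m,m^{*}\in M$ and an induced path in $A$ from a neighbour of one to a neighbour of the other (truncated at the last neighbour of the second vertex), then close it through a second full component $B$ to produce a long hole. The only cosmetic difference is that you package the argument as an induction on $|M|$ to obtain a vertex of $A$ dominating $M\setminus\{m\}$, whereas the paper directly picks $a\in A$ with $N(a)\cap M$ inclusion-maximal; the hole construction is then identical up to swapping the roles of the two separator vertices.
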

\begin{proof}
For every $a \in A$ let $f(a) = N(a) \cap M$.
Suppose that $M \setminus f(a) \neq \emptyset$ for every $a \in A$.
It follows that we can choose $a,a' \in A$ with the following properties:
\begin{itemize}
\item $f(a)$ is incluison-wise maximal among $\{f(a)
| a \in A\}$;
\item $f(a') \not \subseteq f(a)$; and
\item subject to the first two conditions, the length of a  shortest path from
  $a$ to $a'$ in $G[A]$ is the smallest possible.
\end{itemize}
Let $m' \in f(a') \setminus f(a)$. By the maximality of $f(a)$, there exists
$m \in f(a) \setminus f(a')$; see Figure~\ref{fig:L41}.
Let $P$ be a shortest path from $a$ to $a'$ in $G[A]$. Then $|V(P)|>1$.
By the second and third conditions above, 
for every $p \in V(P) \setminus \{a'\}$ we have that $f(p) \subseteq f(a)$.
In particular, no vertex of $V(P) \setminus \{a'\}$ is adjacent to $m'$.
Let $p$ be the neighbor of $m$ closest to $a'$ along $P$.
Then $p \neq a'$. Now $R=m-p-P-a'-m'$ is an induced path with at least
four vertices. By Lemma~\ref{twofull} there exists $B \neq A$ such that $B$ is  a full component for $S$, and  let $Q$ be a shortest
path with endpoints $m$ and $m'$ and all internal vertices in $B$. By concatenating
$R$ and $Q$ we get a hole of length at least five, a contradiction.
\end{proof}

\begin{figure}[tb]
\begin{center}
\includegraphics{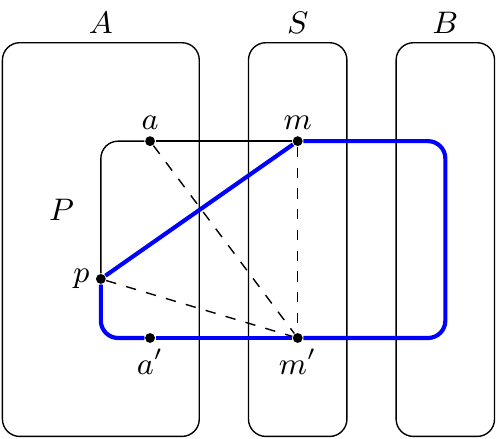}
\caption{Proof of Lemma~\ref{lem:lh:minsep-dom-indset}.}
  \label{fig:L41}
\end{center}
\end{figure}

\begin{lemma}\label{lem:lh:minsep-xab}
  Let $G$ be long-hole-free graph and let $S$ be a minimal separator in $G$.
Let $A, B \in \cc(G-S)$ with $N(A)=N(B)=S$.
Then for every $x \in S$ there exist $a \in N(x) \cap A$ and $b \in N(x) \cap B$ with $S \subseteq N[x] \cup N(a) \cup N(b)$.
\end{lemma}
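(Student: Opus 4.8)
\textbf{Proof plan for Lemma~\ref{lem:lh:minsep-xab}.}
The plan is to fix $x \in S$ and build the two vertices $a \in N(x) \cap A$ and $b \in N(x) \cap B$ essentially by a greedy ``maximal neighborhood'' choice inside each component, exactly in the spirit of Lemma~\ref{lem:lh:minsep-dom-indset} and Lemma~\ref{lem:kprism}. First I would dispose of the vertices of $S$ that are already handled: every $y \in S \cap N[x]$ is fine, so the real task is to cover $S \setminus N[x]$. Within $A$, pick $a \in N(x) \cap A$ so that $N(a) \cap (S \setminus N[x])$ is inclusion-wise maximal among all such choices (such an $a$ exists since $N(A) = S \ni x$, so $x$ has a neighbor in $A$; if $S \setminus N[x] = \emptyset$ the lemma is trivial). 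Symmetrically pick $b \in N(x) \cap B$ maximizing $N(b) \cap (S \setminus N[x])$. The claim is that with these choices $S \subseteq N[x] \cup N(a) \cup N(b)$.

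Suppose not: there is $z \in S \setminus N[x]$ with $z \notin N(a) \cup N(b)$. The strategy now is to produce a long hole. Since $N(A) = S$, the vertex $z$ has a neighbor in $A$; I would like to route a short path inside $A$ from $z$ toward $x$ and argue, using the maximality of $a$, that this path avoids the ``parallel'' structure that would let me shorten. Concretely: take a shortest path $P_A$ from $z$ to $x$ with interior in $A$ (it has length $\geq 2$ since $zx \notin E(G)$), and a shortest path $P_B$ from $z$ to $x$ with interior in $B$ (likewise length $\geq 2$). If $P_A$ and $P_B$ were internally disjoint and no chords crossed between their interiors (automatic, since one interior lies in $A$, the other in $B$, and $A,B$ are distinct components of $G-S$ with $z,x$ the only shared endpoints), their concatenation is an induced cycle through $z$ and $x$ of length $\geq 4$; if it has length $\geq 5$ we are done. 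The remaining case is that this cycle has length exactly $4$, i.e. both $P_A$ and $P_B$ have length exactly $2$, so there are $a' \in A$, $b' \in B$ with $z - a' - x$ and $z - b' - x$ paths. Then $a'$ is a legitimate candidate in the maximization defining $a$, and $a' \in N(z)$, so by maximality of $N(a) \cap (S \setminus N[x])$ there must be some $w \in S \setminus N[x]$ with $w \in N(a) \setminus N(a')$ — and now I would build a hole on $\{x, a, w\}$ together with $\{a', z, b'\}$ or a $B$-path between $w$ and $z$, mirroring the three-vertex argument used at the end of Lemma~\ref{lem:kprism}. The bookkeeping here is to check all adjacencies among $x, a, a', z, w$ and a short $B$-path connecting $w$ to $z$, and verify that unless everything collapses to edges that contradict $z \notin N(a)$ or $w \notin N[x]$, a cycle of length $\geq 5$ appears; the full component $B$ (guaranteed by $N(B)=S$ and Lemma~\ref{twofull}) is what supplies the connecting path between any two vertices of $S$ with interior outside $A \cup S$.

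The step I expect to be the main obstacle is precisely this last case analysis: when the naive $z$-to-$x$ detours through $A$ and through $B$ are both of length two, the concatenation is only a $C_4$, and one has to exploit the maximality of the chosen $a$ (and possibly also of $b$) to find an extra private neighbor $w$ and then carefully re-route to lengthen the cycle past five, while ruling out all the ``short-circuiting'' chords. I would organize this by first using the $A$-side maximality to get the private vertex $w$ for $a$, then splitting on whether $w a' \in E(G)$, whether $w z \in E(G)$, and whether $x b' \dots$ shortcuts exist, in each subcase exhibiting an explicit induced cycle of length $\geq 5$ using a shortest $B$-path between $w$ and $z$ — the same ``shorten the blue cycle to a long hole'' move illustrated in Figures~\ref{fig:L32-1}--\ref{fig:L32-3}. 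Once the contradiction is reached in every subcase, the chosen $a$ and $b$ satisfy $S \subseteq N[x] \cup N(a) \cup N(b)$, as required.
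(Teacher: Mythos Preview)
Your overall strategy is genuinely different from the paper's, and the part you yourself flag as ``the main obstacle'' is where the argument currently has a real gap. Choosing $a\in N(x)\cap A$ and $b\in N(x)\cap B$ \emph{independently} so that $N(a)\cap(S\setminus N[x])$ and $N(b)\cap(S\setminus N[x])$ are inclusion-wise maximal does not by itself guarantee that $N(a)\cup N(b)$ covers $S\setminus N[x]$, and your sketched case analysis does not close. Concretely, after you produce the private vertex $w\in N(a)\setminus N(a')$, the cycle you want to build from $w-a-x-a'-z$ together with a shortest $B$-path $Q$ from $w$ to $z$ can fail to be induced in the subcase $aa'\notin E(G)$, $wz\notin E(G)$: vertices in the interior of $Q$ may be adjacent to $x$ (nothing prevents this, since $x\in S=N(B)$), and chasing those chords leads to further sub-subcases (for instance when both endpoints of $Q$ in $B$ are adjacent to $x$) in which no long hole is apparent and one is pushed to invoke maximality of $b$ as well, producing a second private vertex $w'$ with no obvious relation to $w$. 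The recursion does not visibly terminate.

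The paper avoids this case explosion with a different idea: instead of maximizing on each side separately, it defines for every $z\in S\setminus N[x]$ the set $f(z)=N(z)\cap(A\cup B)\cap N(x)$ (note: both sides at once), takes $Z_0$ to be the vertices with \emph{inclusion-wise minimal} $f(z)$, and lets $Z\subseteq Z_0$ be a maximal independent set. Then $Z\cup\{x\}$ is independent, and Lemma~\ref{lem:lh:minsep-dom-indset} applied to each of $A$ and $B$ yields $a\in A$ and $b\in B$ with $Z\cup\{x\}\subseteq N(a)\cap N(b)$; this is what couples the two sides. A single five-cycle argument then shows $Z_0\subseteq N(a)\cup N(b)$, and minimality of $f$ on $Z_0$ propagates this to all of $S\setminus N[x]$. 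The missing ingredient in your plan is precisely this use of Lemma~\ref{lem:lh:minsep-dom-indset} on a well-chosen independent set to pick $a$ and $b$ simultaneously, together with the ``minimal $f$'' trick that makes one short contradiction suffice.
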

\begin{proof}
%
For each $z\in S\setminus N[x]$, let $f(z)=N(z) \cap (A \cup B) \cap N(x)$.
Let $Z_0$ be the set of all $z \in S \setminus N[x]$ for which $f(z)$ is inclusion-wise minimal among $\{f(z)\colon z\in S \setminus N[x]\}$,
and let $Z \subseteq Z_0$ be an inclusion-wise maximal independent subset of $Z_0$.
Since $Z \subseteq S \setminus N[x]$, the set $Z \cup \{x\}$ is independent as well.
By Lemma~\ref{lem:lh:minsep-dom-indset}, there exists $a \in A$ and $b \in B$ with $Z \cup \{x\} \subseteq N(a) \cap N(b)$.

We claim that
\begin{equation}\label{eq:lh:minsep-xab-a0b0}
Z_0 \subseteq N(a) \cup N(b).
\end{equation}
Assume the contrary: there exists $z_0 \in Z_0 \setminus (N(a) \cup N(b))$. 
By the maximality of $Z$, there exists $z \in N(z_0) \cap Z$; see Figure~\ref{fig:L42}. By the choice of $a$ and $b$, we have
$a,b \in (N(z) \setminus N(z_0)) \cap N(x)$. Since $z \in Z_0$, it follows
from the defintion of $Z_0$ that
 the set $f(z)$ is inclusion-wise minimal among
 $\{f(z)\colon z\in S \setminus N[x]\}$, and therefore
there exists a vertex
$v \in f(z_0)\setminus f(z)=(N(z_0) \setminus N(z)) \cap (A \cup B) \cap N(x)$.
By symmetry, assume $v \in A$. 
Then $x-v-z_0-z-b-x$ is a long hole in $G$, a contradiction. 
This finishes the proof of~\eqref{eq:lh:minsep-xab-a0b0}.

By the definition of $Z_0$,~\eqref{eq:lh:minsep-xab-a0b0} implies $S \setminus N[x] \subseteq N(a) \cup N(b)$. This finishes the proof of the lemma.
\end{proof}

\begin{figure}[tb]
\begin{center}
\includegraphics{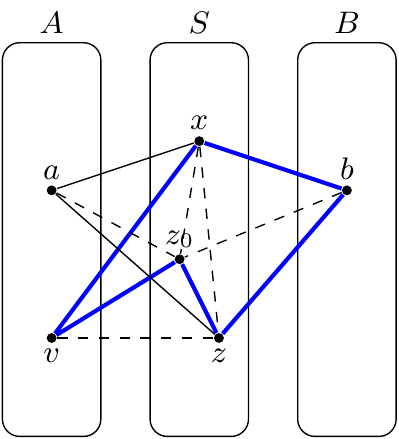}
\caption{Proof of Lemma~\ref{lem:lh:minsep-xab}.}
  \label{fig:L42}
\end{center}
\end{figure}

\begin{lemma}\label{lem:lh:pmc-dom-indset1}
Let $G$ be a long-hole-free graph, let $\pmc$ be a PMC in $G$, and let $M \subseteq \pmc$ be an independent set.
Then either $|M|=1$ and  $\pmc \subseteq N[M]$, or
  there exists $D \in \cc(G-\pmc)$ with $M \subseteq N(D)$.
\end{lemma}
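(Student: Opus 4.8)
The plan is to induct on $|M|$, using Theorem~\ref{thm:pmc} to locate components of $G-\pmc$ that cover nonedges inside $M$, and to manufacture a long hole whenever no single component covers all of $M$.

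I would first dispatch the base cases. If $|M|=1$, say $M=\{m\}$: either $\pmc\subseteq N[m]$, which is the first alternative, or there is $y\in\pmc\setminus N[m]$, and then $my$ is a nonedge of $\pmc$, so the second condition of Theorem~\ref{thm:pmc} gives $D\in\cc(G-\pmc)$ with $m,y\in N(D)$, whence $M\subseteq N(D)$. If $|M|=2$, say $M=\{a,b\}$, then $ab\notin E(G)$ and the second condition of Theorem~\ref{thm:pmc} immediately supplies $D$ with $\{a,b\}\subseteq N(D)$. Since in the induction step below the induction hypothesis is only applied to independent sets of size at least two, its first alternative is never in force there and it always returns a component.

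For the induction step assume $|M|\ge 3$ and fix distinct $a,b\in M$. Applying the induction hypothesis to $M\setminus\{a\}$ yields $D_a\in\cc(G-\pmc)$ with $M\setminus\{a\}\subseteq N(D_a)$; if $a\in N(D_a)$ then $M\subseteq N(D_a)$ and we are done, so assume $a\notin N(D_a)$. In the same way obtain $D_b\in\cc(G-\pmc)$ with $M\setminus\{b\}\subseteq N(D_b)$, and assume likewise $b\notin N(D_b)$. Then $D_a\ne D_b$, since $a\in N(D_b)\setminus N(D_a)$. As $ab\notin E(G)$, Theorem~\ref{thm:pmc} also supplies $D_{ab}\in\cc(G-\pmc)$ with $\{a,b\}\subseteq N(D_{ab})$, and $D_{ab}$ differs from both $D_a$ and $D_b$, because $a\in N(D_{ab})\setminus N(D_a)$ and $b\in N(D_{ab})\setminus N(D_b)$. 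The claim is that $M\subseteq N(D_{ab})$, which finishes the proof. Suppose not, and choose $w\in M\setminus\{a,b\}$ (nonempty since $|M|\ge 3$) with $w\notin N(D_{ab})$; note $w\in N(D_a)\cap N(D_b)$. Let $P_1$ be a shortest path from $a$ to $w$ via $D_b$, let $P_2$ be a shortest path from $w$ to $b$ via $D_a$, and let $P_3$ be a shortest path from $b$ to $a$ via $D_{ab}$; each has length at least $2$, since its two ends lie in the independent set $M$. These paths are internally disjoint and pairwise share only the endpoints $a,b,w$, so their union is a cycle $C$ with at least six vertices. I would then verify that $C$ is induced: the interiors of $P_1,P_2,P_3$ lie in three pairwise distinct components of $G-\pmc$, so there is no edge between them; each $P_i$ is chordless; and the remaining potential chords are excluded because $a$ has no neighbour in $D_a$ (which contains the interior of $P_2$), $b$ has no neighbour in $D_b$ (which contains the interior of $P_1$), $w$ has no neighbour in $D_{ab}$ (which contains the interior of $P_3$), and $a,b,w$ are pairwise nonadjacent. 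Hence $C$ is a long hole, contradicting the hypothesis on $G$, and the claim follows.

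I expect the induction step to be the main obstacle, and within it the decisive idea is the routing of the hole: $a$ is sent through $D_b$ and $b$ through $D_a$, each through the \emph{other} component rather than its own, so that the non-neighbourhoods $a\notin N(D_a)$, $b\notin N(D_b)$, $w\notin N(D_{ab})$ are exactly what destroys every chord of $C$. In contrast, the base cases and the bookkeeping establishing that $D_a,D_b,D_{ab}$ are pairwise distinct are routine.
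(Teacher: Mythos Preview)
Your proof is correct and genuinely different from the paper's. Both arguments ultimately build a long hole by routing three induced paths through three pairwise distinct components of $G-\pmc$, with the endpoints being three pairwise nonadjacent vertices of $M$; the difference lies in how those three components are found. The paper proceeds by an extremal choice: it picks a component $D$ maximizing $|N(D)\cap M|$, takes $t\in M\setminus N(D)$, and shows that among the components whose neighbourhood contains $t$ there must be two with inclusion-incomparable intersections with $N(D)\cap M$ (since these intersections cover $N(D)\cap M$ but, by maximality of $D$, none equals it). You instead obtain the components by induction, applying the hypothesis to $M\setminus\{a\}$ and $M\setminus\{b\}$ and pulling $D_{ab}$ straight from Theorem~\ref{thm:pmc}. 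Your route is arguably more transparent---the ``cross-routing'' of $a$ through $D_b$ and $b$ through $D_a$ makes the chord analysis immediate---while the paper's argument is non-inductive and slightly shorter, at the cost of the extra incomparability step.
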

\begin{proof}
  Assume that the first alternative does not hold. That is,
  if $|M|=1$ then $\pmc \not \subseteq N[M]$.
Then, by Theorem~\ref{thm:pmc}, for every $v \in M$ there exists a component $D \in \cc(G-\pmc)$ covering
a nonedge from $v$ to some other vertex of $\pmc$.
Let $D \in \cc(G-\pmc)$ maximize $|N(D) \cap M|$. Suppose, contrary to the second alternative, that $M \nsubseteq N(D)$ and let $t \in M \setminus N(D)$; see Figure~\ref{fig:L43}.
Denote $\compfam = \{D' \in \cc(G-\pmc)~|~t \in N(D')\}$; note that $D \notin \compfam$.
For every $D' \in \compfam$ let $f(D') = N(D') \cap N(D) \cap M$. 
Since $M$ is an independent set, for every $x \in M \cap N(D)$ there exists $D_x \in \compfam$ satisfying $\{x,t\}\subseteq N(D_x)$.
In particular, $\bigcup_{D' \in \compfam} f(D') = N(D) \cap M$. 
On the other hand, for every $D' \in \compfam$, the maximality of
$|N(D) \cap M|$ and the
fact that $t \in N(D')$ imply that $f(D') \neq N(D) \cap M$.
Consequently, there exist two components $D_1,D_2 \in \compfam$ with inclusion-wise incomparable $f(D_1)$ and $f(D_2)$.
Let $x_i \in f(D_i) \setminus f(D_{3-i})$ for $i=1,2$ and let $P_i$ be a shortest path from $x_i$ to $t$ via $D_i$. 
Let $Q$ be a shortest path from $x_1$ to $x_2$ via $D$. Then $t-P_1-x_1-Q-x_2-P_2-t$ is a long hole in $G$, a contradiction.
This finishes the proof of the lemma.
\end{proof}

\begin{figure}[tb]
\begin{center}
\includegraphics{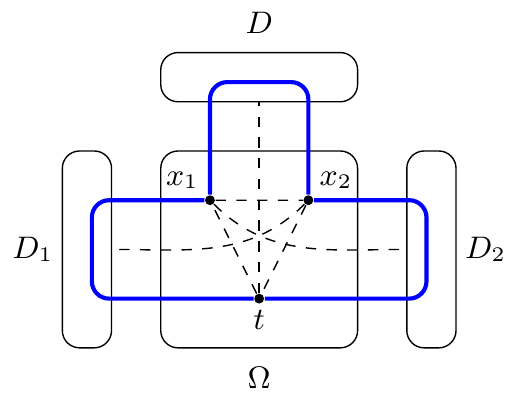}
\caption{Proof of Lemma~\ref{lem:lh:pmc-dom-indset1}.}
  \label{fig:L43}
\end{center}
\end{figure}

\begin{lemma}\label{lem:lh:pmc-dom-nonnei}
Let $G$ be a long-hole-free graph, let $\pmc$ be a PMC in $G$, and let $v \in \pmc$. 
Then either $\pmc \subseteq N[v]$ or there exists $D \in \cc(G-\pmc)$ with $\pmc \setminus N(v) \subseteq N(D)$
(in particular, $v \in N(D)$). 
\end{lemma}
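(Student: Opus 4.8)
The plan is to mirror the proof of Lemma~\ref{lem:lh:minsep-xab}, with Lemma~\ref{lem:lh:pmc-dom-indset1} now playing the role that Lemma~\ref{lem:lh:minsep-dom-indset} played there. Set $X = \pmc \setminus N[v]$, the set of non-neighbours of $v$ inside $\pmc$. If $X = \emptyset$ then $\pmc \subseteq N[v]$ and we are in the first alternative, so from now on I assume $X \neq \emptyset$ and aim to produce a component $D$ of $G-\pmc$ with $\pmc \setminus N(v) = X \cup \{v\} \subseteq N(D)$; the containment $v \in N(D)$ then comes for free.

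First I would attach to each $z \in X$ its "dominator index". Let $\compfam = \{D' \in \cc(G-\pmc) : v \in N(D')\}$. For $z \in X$ the nonedge $vz$ of the PMC $\pmc$ is covered by some component of $G-\pmc$ (Theorem~\ref{thm:pmc}), and that component lies in $\compfam$; hence $f(z) := \{D' \in \compfam : z \in N(D')\}$ is nonempty for every $z \in X$. Let $Z_0 \subseteq X$ be the set of those $z$ for which $f(z)$ is inclusion-wise minimal in $\{f(z') : z' \in X\}$, and let $Z$ be an inclusion-wise maximal independent subset of $Z_0$; both are nonempty since $X \neq \emptyset$. Since $Z \subseteq X$ and $v$ has no neighbour in $X$, the set $Z \cup \{v\}$ is independent with at least two vertices, so Lemma~\ref{lem:lh:pmc-dom-indset1} gives a component $D \in \cc(G-\pmc)$ with $Z \cup \{v\} \subseteq N(D)$; in particular $D \in \compfam$.

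The step I expect to be the main obstacle, and the only place long-hole-freeness enters, is upgrading "$Z \subseteq N(D)$" to "$Z_0 \subseteq N(D)$". Suppose $z_0 \in Z_0 \setminus N(D)$; then $z_0 \notin Z$, so by maximality of $Z$ there is $z \in Z$ with $z z_0 \in E(G)$. Since $D \in \compfam$, $z \in N(D)$, and $z_0 \notin N(D)$, we have $D \in f(z) \setminus f(z_0)$, so $f(z) \not\subseteq f(z_0)$; together with minimality of $f(z)$ (recall $z \in Z_0$) this forces $f(z_0) \not\subseteq f(z)$, so we may pick $D^{\ast} \in f(z_0) \setminus f(z)$. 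Then $v, z_0 \in N(D^{\ast})$ and $z \notin N(D^{\ast})$, while $v, z \in N(D)$, $z_0 \notin N(D)$, and $D \neq D^{\ast}$. Now take a shortest path $P_1$ from $v$ to $z_0$ via $D^{\ast}$ and a shortest path $P_2$ from $z$ to $v$ via $D$; since $v$ is adjacent to neither $z$ nor $z_0$ (both lie in $X$), each $P_i$ has an interior vertex, lying in $D^{\ast}$ respectively $D$. Concatenating $P_1$, the edge $z_0 z$, and $P_2$ yields an induced cycle: $D$ and $D^{\ast}$ are distinct components so there are no cross edges; no interior vertex of $P_1$ sees $z$ because $z \notin N(D^{\ast})$; no interior vertex of $P_2$ sees $z_0$ because $z_0 \notin N(D)$; and $v$ misses both $z$ and $z_0$. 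This cycle has length at least $5$, contradicting long-hole-freeness. Hence $Z_0 \subseteq N(D)$.

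To finish, observe that every $z \in X$ satisfies $f(z') \subseteq f(z)$ for some $z' \in Z_0$, since a finite family of sets has an inclusion-minimal element below each of its members. As $z' \in Z_0 \subseteq N(D)$ and $v \in N(D)$ with $D \in \compfam$, we get $D \in f(z') \subseteq f(z)$, i.e. $z \in N(D)$. Therefore $X \cup \{v\} \subseteq N(D)$, which is exactly the second alternative, and in particular $v \in N(D)$.
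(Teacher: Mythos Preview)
Your proof is correct. The long hole you build has exactly the same shape as the one in the paper's argument --- $v$, a path through one component to some $a \in \pmc \setminus N[v]$, an edge $ab$ inside $\pmc \setminus N[v]$, and a path through a second component from $b$ back to $v$ --- but the two proofs arrive at this configuration by genuinely different routes. The paper argues by contradiction: take an inclusion-wise minimal $A \subseteq \pmc \setminus N[v]$ such that no component covers $A \cup \{v\}$, note that $A$ cannot be independent (else Lemma~\ref{lem:lh:pmc-dom-indset1} applies), pick an edge $xy$ in $A$, and use minimality of $A$ to obtain the two components $D_x, D_y$ separating $x$ from $y$. Your argument is instead constructive and deliberately parallels the proof of Lemma~\ref{lem:lh:minsep-xab}: you index each $z \in \pmc \setminus N[v]$ by its set $f(z)$ of covering components, pass to the minimal layer $Z_0$ and a maximal independent $Z \subseteq Z_0$, invoke Lemma~\ref{lem:lh:pmc-dom-indset1} once to produce the candidate $D$, and then use the $Z_0$/$Z$ machinery to upgrade. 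The paper's version is shorter and avoids the auxiliary $f$, $Z_0$, $Z$; your version has the virtue of exhibiting Lemmas~\ref{lem:lh:minsep-xab} and~\ref{lem:lh:pmc-dom-nonnei} as formal analogues, with Lemma~\ref{lem:lh:pmc-dom-indset1} replacing Lemma~\ref{lem:lh:minsep-dom-indset}, and it outputs the witnessing component directly rather than via contradiction.
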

\begin{proof}
  Suppose that $\pmc \not \subseteq N[v]$ but there is no component $D$ as in
  the statement of the lemma; see Figure~\ref{fig:L44}.
Let $A \subseteq \pmc \setminus N[v]$ be inclusion-wise minimal such that there is no component $D \in \cc(G-\pmc)$
with $A \cup \{v\} \subseteq N(D)$. 
Observe that $A$ is non-empty, because there is at least one non-edge within $\Omega$ with one endpoint $v$, so in particular there is at least one component $D\in \cc(G-\pmc)$ satisfying $v\in N(D)$.
If $A$ is an independent set, then so is $A \cup \{v\}$, and then Lemma~\ref{lem:lh:pmc-dom-indset1} contradicts
the choice of $A$. Hence, there exists an edge $xy \in E(G[A])$. 
By the minimality of $A$, there exist a component $D_x \in \cc(G-\pmc)$
with $(A \setminus \{y\}) \cup \{v\} \subseteq N(D_x)$
and a component $D_y \in \cc(G-\pmc)$
with $(A \setminus \{x\}) \cup \{v\} \subseteq N(D_y)$.
By the choice of $A$, we have $y \notin N(D_x)$ and $x \notin N(D_y)$. 
Let $P_x$ be a shortest path from $x$ to $v$ via $D_x$ and similarly define $P_y$ in $D_y$.
But then $x-P_x-v-P_y-y-x$ is a long hole in $G$, a contradiction.
\end{proof}

\begin{figure}[tb]
\begin{center}
\includegraphics{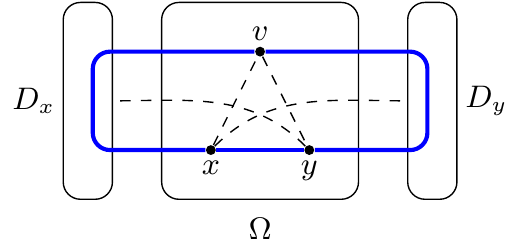}
\caption{Proof of Lemma~\ref{lem:lh:pmc-dom-nonnei}.}
  \label{fig:L44}
\end{center}
\end{figure}


\begin{lemma}\label{lem:lh:pmc-dom}
Let $G$ be a long-hole-free graph, let $\pmc$ be a PMC in $G$, and let $v \in \pmc$ be arbitrary.
Then either $\pmc \subseteq N[v]$ or 
there exist two vertices $x \in N(v) \setminus \pmc$ and $y \in N(v)$ such that $\pmc \subseteq N[\{v,x,y\}]$.
\end{lemma}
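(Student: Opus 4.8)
The plan is to derive Lemma~\ref{lem:lh:pmc-dom} by combining the previous two lemmas: Lemma~\ref{lem:lh:pmc-dom-nonnei} produces a single component whose neighborhood swallows $\pmc\setminus N(v)$, and then the minimal-separator statement Lemma~\ref{lem:lh:minsep-xab} produces the two extra vertices. Assume $\pmc\not\subseteq N[v]$, since otherwise there is nothing to prove. First I would apply Lemma~\ref{lem:lh:pmc-dom-nonnei} to $v$ to obtain a component $D\in\cc(G-\pmc)$ with $\pmc\setminus N(v)\subseteq N(D)$ and, crucially, $v\in N(D)$. Set $S:=N(D)$. By Proposition~\ref{prop:pmcnbrs}, $S$ is a minimal separator of $G$ and $D$ is a full component for $S$; by Lemma~\ref{twofull} there is a second full component $B\in\cc(G-S)\setminus\{D\}$, that is, $N(B)=S$.

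Next, since $v\in S$, I would invoke Lemma~\ref{lem:lh:minsep-xab} with the minimal separator $S$, the two full components $A:=D$ and $B$, and the vertex $x:=v$. This yields $a\in N(v)\cap D$ and $b\in N(v)\cap B$ such that $S\subseteq N[v]\cup N(a)\cup N(b)$.

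Finally I would chain the inclusions:
$$\pmc\setminus N(v)\ \subseteq\ N(D)\ =\ S\ \subseteq\ N[v]\cup N(a)\cup N(b),$$
hence $\pmc\subseteq N[v]\cup N(a)\cup N(b)$, and since $a,b\in N(v)\subseteq N[v]$ this equals $N[\{v,a,b\}]$. Because $D\in\cc(G-\pmc)$ we have $a\notin\pmc$, so $a\in N(v)\setminus\pmc$ and $b\in N(v)$; taking $x:=a$ and $y:=b$ gives the desired conclusion.

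I do not expect a genuine obstacle here: the lemma is essentially a repackaging of Lemma~\ref{lem:lh:minsep-xab} across the bridge supplied by Lemma~\ref{lem:lh:pmc-dom-nonnei}. The only points that need a moment's care are verifying that $v$ actually lies in the separator $S=N(D)$ so that Lemma~\ref{lem:lh:minsep-xab} applies with $x=v$ (this is exactly the parenthetical ``$v\in N(D)$'' in Lemma~\ref{lem:lh:pmc-dom-nonnei}), and observing that the witness vertex produced inside $D$ automatically lies outside $\pmc$, which is what makes it eligible as the vertex $x\in N(v)\setminus\pmc$ demanded in the statement.
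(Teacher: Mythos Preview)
Your proof is correct and follows essentially the same route as the paper: apply Lemma~\ref{lem:lh:pmc-dom-nonnei} to obtain $D$, use Proposition~\ref{prop:pmcnbrs} and Lemma~\ref{twofull} to view $N(D)$ as a minimal separator with two full components, and then invoke Lemma~\ref{lem:lh:minsep-xab} at $v$ to obtain the two extra vertices. You are in fact slightly more explicit than the paper in checking that $v\in N(D)$ and that the vertex produced inside $D$ lies outside $\pmc$.
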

\begin{proof}
Assume that $\pmc \not\subseteq N[v]$.
Let $D \in \cc(G-\pmc)$ be a component with $\pmc \setminus N(v) \subseteq N(D)$ (it exists by Lemma~\ref{lem:lh:pmc-dom-nonnei}). 
Recall that by Proposition~\ref{prop:pmcnbrs} $N(D)$ is a minimal separator and $D$ is a full component for
$N(D)$. By Lemma~\ref{twofull} there exists $B \neq D$ such that $B$ is a full
component for $N(D)$.
By Lemma~\ref{lem:lh:minsep-xab}, there exist $x \in N(v) \cap D$ and $y \in N(v) \cap B$ with 
$N(D) \subseteq N[v] \cup N(x) \cup N(y)$. Since $\pmc\setminus N(v) \subseteq N(D)$, we have
$\pmc \subseteq N[\{v,x,y\}]$ as desired.
\end{proof}

We can now deduce the following:

\begin{theorem}\label{thm:3dom}
For every long-hole-free graph $G$ and for every potential maximal clique $\pmc$ in $G$
there exists a set $Z \subseteq V(G)$ of size at most three such that $\pmc \subseteq N[Z]$.
\end{theorem}

Theorem~\ref{thm:3dom} immediately implies Theorem~\ref{thm:seps} using standard
techniques.
\begin{proof}[Proof of Theorem~\ref{thm:seps}.]
Let $G$ and $\weight$ be as in the statement of the theorem. By Theorem~\ref{thm:3dom}, it suffices to show that there is a potential maximal clique in $G$ that is a balanced separator with respect to $\weight$.
To this end, let $F$ be a minimal chordal completion of $G$.
A folklore result (see e.g.~\cite{GrzesikKPP19}) is that $G+F$ admits a tree decomposition where the bags are exactly the maximal cliques of $G+F$. 
Let $T$ be the tree of the decomposition. For every edge $e \in E(T)$, let $T_1^e$ and $T_2^e$ be the two components of $T-\{e\}$ and for $i=1,2$ let $V_i^e$ be the union of all the bags of $T_i^e$.
Orient the edge $e$ from the endpoint in $T_i^e$ with smaller weight of $V_i^e$ to the one with the larger weight, breaking ties arbitrarily. 
Let $t \in V(T)$ be a node of zero outdegree. Then it can be easily checked that the bag at $t$ is a balanced separator and we are done.
\end{proof}

\bibliographystyle{abbrv}

\bibliography{../references}

\end{document}